\theoremstyle{definition}
\newtheorem{definition}{Definition}[chapter]
\theoremstyle{plain}
\newtheorem{theorem}{Theorem}[chapter]
\newtheorem{lemma}{Lemma}[chapter]
\title{}
\author{}
\date{}
\begin{document}

%%%%%%%%%%%%%%%%%%%
%\begin{titlepage}
%\begin{flushright}
%LU-TP 11-28 \\
%September 2011
%\end{flushright}

\begin{center}
\Large{\textbf{The entanglement or separability of \emph{mixed} 
quantum\\ states as a matter of the choice of
observables\\}}
\vspace{.5cm}
\large{Iacopo Pozzana\footnote{i.pozzana@gmail.com}}\\ 
\vspace{.3cm}
Dipartimento di Fisica, Universit\`a di Pisa \\
Largo Pontecorvo 3, 56127 Pisa, Italy

\end{center}
\noindent
In quantum systems, entanglement corresponds to nonclassical correlation of nonlocal observables.
Thus, entanglement (or, to the contrary, separability) of a given quantum state is not uniquely
determined by properties of the state, but may depend on the choice of the factorization of the algebra
of observables. In the present work, we expose and systematize some recently reported results about
the possibility to represent a single quantum state as either entangled or separable.
We will distinguish in particular the cases of pure and mixed states. For pure states,
it has been shown that observables can always be constructed such that any state has any amount
of entanglement possible. For mixed states, the situation is more complex and only partial results
are known: while it is always possible to choose a factorization such that a state appears separable,
a general criterion to determine whether a state can be represented as entangled is not known.
These results will be illustrated by several examples, the phenomenon of quantum teleportation,
and the geometry of the states of two qubits.

\begin{center}

\vskip 1.5cm 
\textbf{\Large{Bachelor Thesis\\}}

\vspace{.5cm}
Submitted to:\\
Dipartimento di Fisica, Universit\`a di Pisa

\vspace{.5cm}
Defended on:\\
29 February 2012

\vspace{.5cm}
\large{Advisor:\\
Hans-Thomas Elze\footnote{elze@df.unipi.it}}

\end{center}

%\cleardoublepageddddd
%%%%%%%%%%%%%%%%
\begin{abstract}

In this work we will examine how a given quantum state can appear to be either entangled or separable depending on the choice
of the factorization of the algebra of observables of a composite multi-partite
quantum system.~\footnote{The main reference on which this work will be based is 
\cite{thirring}.}
We will focus our attention on mixed states here. 

For pure states it is always possible to find both a factorization such that the state
appears entangled and another factorization where it appears separable.
On the other hand, for a mixed state it is not always possible to find a factorization such that it appears maximally entangled.
We will discuss this statement and investigate under what
conditions a mixed state can be represented as entangled.

In the first part of this work, we will introduce some basic concepts necessary for the following discussion,
such as the Hilbert space, the tensor product structure, the qubit and the density operator.

In the central part, some possible constraints under wich a quantum state can be represented as entangled are discussed:
\begin{itemize}
\item For pure states no further restrictions are necessary and we will summarize the relevant results reported in the
literature recently. In short, depending on the choice of observables,
a pure quantum state of a composite system can be demonstrated to be either entangled or
separable. In this sense, entanglement is not an invariant property of a given state,
despite its fundamental role concerning properties of multi-partite systems
(such as in the famous Einstein-Podolsky-Rosen considerations).
\item For mixed states the situation is more complex and only partial results are known.
Indeed, it is always possible to choose a factorization such that a state appears separable,
as we shall demonstrate. However, it is not known in general how much ``mixedness'' is
admissible, such that a state can still be represented as entangled. This will be
investigated in more detail in our work.
\end{itemize}

The general statements at which we arrive will be applied to some illustrative examples and to the well
known phenomenon of quantum teleportation, an entanglement feature of great interest in quantum information theory.
We will also describe and discuss the geometry of quantum physical states for the case of two qubits.

We will close with our conclusions, an outlook on possible extensions of the results, 
and a list of references to the literature.
\thispagestyle{empty}
%\cleardoublepage
\thispagestyle{empty}
\end{abstract}

\tableofcontents

%%%%%%%%%%%%%%%%%%%%%%
\chapter{Introduction}
\begin{quotation}
\begin{flushright}
 ``If we have to go on with these damned quantum jumps, \\ then I'm sorry that I ever got involved.'' \\
 \vspace{0.2cm} \emph{E. Schr\"{o}dinger}
\end{flushright}
\end{quotation}
\vspace{1cm}

When two (or more) quantum systems interact, for example in a scattering process, a correlation arises
among them; this correlation lasts in time and can be exploited to collect information about, or even modify,
one of the systems without directly acting on it, no matter how far it is.

This phenomenon, which is called \emph{entanglement} and has no analogoue in classical theory,
grossly violates the locality principle,
as was first highlighted by Einstein, Podolski and Rosen in \cite{epr}.
Purpose of the authors was to demonstrate that the existence of entanglement is incompatible
with the completeness of quantum theory, but their thesis was proved to be wrong in many occasions
over time, first of all by John S. Bell in his 1964 article ``\emph{On the Einstein-Poldolsky-Rosen paradox}''
\cite{bell}.
Bell showed that any theory reproducing the predictions of quantum mechanics has to violate locality.

Quantum theory is thus a complete non-local theory, with entanglement as an inevitable consequence of its
principles. The article by EPR, against the scopes of the authors, laid the bases for the birth
of quantum information theory, the purpose of which is to study the nature of entanglement and the way to
exploit it.

Let us consider a state in a composite quantum system. It may seem reasonable to suppose that this state has a certain
fixed amount of entanglement, or in other words, the system has a fixed amount of quantum correlation within.
However, it has been proven, as we are going to explain in more detail in the present work,
that the quantity and even the existence of
entanglement for a certain state depends on the choice of the factorization of the algebra of observables,
i.e.~on the choice of the physical observables used to describe it. As an example, a composite system
consisting of a couple of spins could be described in terms of total spin, and in this factorization no
entanglement will appear: the system would not even look ``composite''.

The possibility for entanglement to be ``created'' and ``destroyed'' is maybe its most fascinating feature,
and is the basis for many of the recent achievements in quantum information and quantum computing,
such as the protocols of quantum cryptography and quantum teleportation, or the superdense coding.
The questions ``when'' and ``how'' it is possible to manipulate entanglement are thus of main importance.

The first part of the present work will be devoted to a summary of the necessary basic concepts
to deal with quantum mechanics and, in particular, with quantum information theory, such as the Hilbert space,
the tensor product space, the density operator, and so on. \\
Then, in {\bf Chapter \ref{chap2}}, we will start the central part of our exposition,
treating the case of pure states; it will be shown how it is always possible to make any pure state appear either
entangled or not (in the latter case we talk of \emph{separable} or \emph{factorized} states).\\
The possibility for a state to be described in terms of a wavefunction is strictly connected
with the possibility to make it appear entangled (or not). In {\bf Chapter \ref{chap3}}
we will thus consider the more general case of mixed states, and we will see that the situation here is more complex:
we will demonstrate that it is still always possible to represent any state as separable, but we will also
see examples of \emph{absolutely separable} states, i.e.~states which appear separable in every factorization;
even if it is not always possible to make a mixed state appear entangled, we will introduce two classes
of states for which this possibility exists. The chapter will be closed by two complements, a basic exposition of the protocol
of quantum teleportation, and a description of the geometry of physical states for the case of two qubits,
i.e.~for $2 \times 2$ dimensional composite system. \\
We will then draw our conclusions, briefly summarizing the results here exposed. The present work includes,
in its last pages, a short list of references to the literature.

%%%%%%%%%%%%%%%%%%%%%%%%
\chapter{Basic concepts}

%%%%%%%%%%%%%%%%%%%%%%%
\section{Hilbert space} 

A \emph{Hilbert space} is a vector space, real or complex, where an inner product is defined and which is a complete metric 
space with respect to the norm induced by this inner product. A Hilbert space is called \emph{separable} 
if it admits a countable complete orthonormal basis set.

In quantum mechanics a physical system is associated with a separable Hilbert space ${\cal H}$.
The state of the system is represented by a ray in ${\cal H}$.
Using the Dirac notation we will denote a vector in ${\cal H}$ by $|\psi \rangle$ and its dual by $\langle \psi|$.
Correspondingly, given any two vectors $|\psi \rangle$ and $|\phi \rangle$, their inner product 
and the associated norm will be respectively denoted 
by $\langle \phi | \psi \rangle $ and $ || \psi || = \sqrt{ \langle \psi | \psi \rangle } $.

Since, for any complex $\alpha$, the two vectors $|\psi \rangle$ and $\alpha |\psi \rangle$ correspond to the 
same state, we can always represent a state by a vector with unit norm (and so we'll do, except where otherwise specified).

%%%%%%%%%%%%%%%%%%%%%%%%%%
\subsection{Complete sets} 

Given a complete set of vectors $\{ |n \rangle \}_{n=1,2,3, \ldots}$, the existence of which 
is granted by the 
separability, we can expand any other vector $|\psi \rangle$ as:
$$
|\psi \rangle = \sum_n c_n |n \rangle , 
$$
where $c_n = \langle n | \psi \rangle $ is the projection of $|\psi \rangle$ along $|n \rangle$.
The completeness of the set $|n \rangle$ is expressed by the \emph{closure relation}
$$
\sum_n c_n |n \rangle \langle n| = \hat{\mathbb I},
$$
where $\hat{\mathbb I}$ represents the identity operator.

Another useful property that we can require of the set of basis states is orthonormality:
$$
\langle n|m \rangle = \delta_{nm} ,
$$
where $\delta_{nm}$ is the Kronecker delta.

%%%%%%%%%%%%%%%%%%%%%%%% 
\subsection{Observables} 

Within this representation, a physical observable is represented by a hermitian 
operator (say $\hat{A}$) 
which has a complete set of eigenvectors (\emph{eigenstates}).
If the system is in an eigenstate of $\hat{A}$ (say $|n \rangle$), then the result of a 
measurement of the observable is certain: it corresponds to the eigenvalue of 
$\hat{A}$ associated with $|n \rangle$, i.e.:
$$
\hat{A}|n \rangle = A_n|n \rangle. 
$$

For a generic state $|\psi \rangle$ it is not possible to make a certain 
prediction of the outcome of a measurement. Instead,
we can define an \emph{expectation value} of $\hat{A}$ on $|\psi \rangle$ as
\begin{equation} \label{expect}
\langle \hat{A} \rangle \equiv \langle \psi|\hat{A}|\psi \rangle,
\end{equation}
which is the statistical mean value of the operator on the state.

An operator acting on a separable space can be represented as a matrix. 
Its matrix elements with respect to an assumed complete basis are given 
by:
$$
{\hat A}_{mn} = \langle m| {\hat A}|n \rangle.
$$

%%%%%%%%%%%%%%%%%%%%%%%%%%%%%% 
\section{Tensor product space} 

Let us consider a system which is composed of a number $N$ of subsystems. 
Since, under certain circumstances, every subsystem can be studied separately, 
it must be represented by a Hilbert space.
In our representation, then, the relation between the total system and the subsystems 
can be simply expressed in terms of a tensor product:
$$
{\cal H} = \bigotimes^N_{i = 1} {\cal H}_i,
$$
where ${\cal H}_i$ and ${\cal H}$ denote the space of the $i$-th subsystem and 
the total system, respectively.~\footnote{Let us emphasize that the values 
assumed by the parameter $i$ are just labels, not necessarily numbers.}
Thus, ${\cal H}$ is said to have a \emph{tensor product structure}(TPS).

Given a vector $| \psi_i \rangle$ for each of the subsystems, we can build a vector of the total system:
$$
|\Psi \rangle = \bigotimes^N_{i = 1} |\psi_i \rangle.
$$
For an easier reading we can use the notation $|\Psi \rangle = |\psi_1 \rangle |\psi_2 \rangle \cdots |\psi_N \rangle$ or, when no confusion is possible, 
the even more compact $|\Psi \rangle = |\psi_1\ \psi_2\ \cdots\ \psi_N \rangle$. 

Notice that not every vector in ${\cal H}$ can be written in this \emph{factorized form}; in quantum physics factorizability has a precise meaning, 
corresponding to the possibility that a state is completely described in terms 
of the states of the subsystems, which is, of course, not always possible. 
We shall discuss related issues further in due course. 

%%%%%%%%%%%%%%%%%%%%%%%%%%%%%% 
\subsection{Factorized states} 

Given the sets of basis states $\{|n_1\rangle\},\{|n_2\rangle\},\cdots,\{|n_N\rangle\}$ for 
the subsystems "1", "2", etc., it is then trival to demonstrate that $\{ |n_1\ n_2\ \cdots\ n_N\rangle\}$ 
forms a basis for ${\cal H}$. 
We can see that the dimension $d$ of ${\cal H}$ is given by the product of 
the dimensions $d_i$ of its subspaces:
$$
d= \prod^N_{i=1} d_i.
$$
Orthonormality is preserved:
\begin{eqnarray*}
\langle n_1'\ n_2'\ \cdots\ n_N'|n_1''\ n_2''\ \cdots\ n_N''\rangle & = & \langle n_1'|n_1''\rangle\ \langle n_2'|n_2''\rangle\ \cdots\ \langle n_N'|n_N''\rangle \\ 
& = & \delta_{n_1'n_1''}\ \delta_{n_2'n_2''}\ \cdots\ \delta_{n_N'n_N''}\ .
\end{eqnarray*}

Let us expand a factorized $|\Psi_f \rangle \in {\cal H}$ on this new set:
\begin{eqnarray*}
|\Psi_f \rangle & = & \bigotimes^N_{i = 1} |\psi_i \rangle\ \\
& = & ( \sum_{n_1} c_{n_1} |n_1 \rangle ) \cdot\ ( \sum_{n_2} c_{n_2} |n_2 \rangle ) \cdot\  (\cdots) \cdot\ ( \sum_{n_N} c_{n_N} |n_N \rangle ) \\
& = & \sum_{n_1,n_2,\cdots,n_N}c_{n_1}\, c_{n_2} \cdots c_{n_N} \,|n_1 \rangle \,|n_2 \rangle \cdots \,|n_N \rangle.
\end{eqnarray*}
Similarly, a generic $|\Psi \rangle$ can be written as:
\begin{equation} \label{Psi}
|\Psi \rangle = \sum_{n_1,n_2,\cdots,n_N}c_{n_1, n_2, \cdots, n_N} |n_1 \rangle \,|n_2 \rangle \, \cdots \,|n_N \rangle,
\end{equation}
but now each coefficient depends, in general, on all the subspaces. This is the mathematical 
expression of the fact that a composite quantum mechanical system, generally, cannot 
be completely characterized in terms of its subsystems: there exist more general states than 
the factorized ones.

%%%%%%%%%%%%%%%%%%%%%%%%%%%%%%%%%%%%% 
\subsection{Tensor product operators} 

An operator on ${\cal H}_1$, for example, can be extended to the entire ${\cal H}$ by just setting 
it equal to the identity with respect to the other subspaces:
\begin{eqnarray*}
{\hat A}^{(1)} : {\cal H}_1 \rightarrow {\cal H}_1, & \hat{\mathbb I}^{(i)} : {\cal H}_i \rightarrow {\cal H}_i, & {\hat A} : {\cal H} \rightarrow {\cal H}, \\
& {\hat A}  \equiv  {\hat A}^{(1)} \otimes \hat{\mathbb I}^{(2)} \otimes \cdots \otimes \hat{\mathbb I}^{(N)}. &
\end{eqnarray*}
The eigenspaces of the extended operator are formed by all those vectors whose projection on ${\cal H}_1$ is an eigenvector for ${\hat A}^{(1)}$; 
for every eigenvector of ${\hat A}^{(1)}$ we now have $d$ independent eigenvectors of ${\hat A}$, where $d$ 
is the dimension of $\otimes^N_{i=2}{\cal H}_i$ (eventually a countable infinity), with the same eigenvalue: 
the spectrum of the eigenvalues does not change but the degeneracy is multiplied by $d$.

Generally speaking, we can combine $N$ operators acting on the subspaces in one operator acting on ${\cal H}$.
We use the tensor product structure as for the vectors:
\begin{equation} \label{facop}
{\hat A} = \bigotimes^N_{i=1} {\hat A}^{(i)}
\end{equation}
and, as for the vectors, we point out that not every operator can be written in the factorized form (\ref{facop}).
Because of the linearity of the various ${\hat A}^{(i)}$, tensor product operators act on a generic $|\Psi \rangle \in {\cal H}$ in the 
folllowing way:
$$
{\hat A} |\Psi \rangle = \sum_{n_1,n_2,\cdots,n_N}c_{n_1, n_2, \cdots, n_N} ({\hat A}^{(1)}|n_1 \rangle) \,({\hat A}^{(2)}|n_2 \rangle) \, 
\cdots \,({\hat A}^{(N)}|n_N \rangle),
$$
so that if the state is factorized the expression simplifies to become 
\begin{eqnarray*}
{\hat A} |\Psi_f \rangle &=& \bigotimes^N_{i = 1} {\hat A}^{(i)}|\psi_i \rangle \\
&=& \prod^N_{i=1} ( \sum_{n_i} c_{n_i} {\hat A}^{(i)}|n_i \rangle ).
\end{eqnarray*}

%%%%%%%%%%%%%%%%%%%%%%%%%%%%%%%%%%%%%%%%%%%%%%%%%%%%%%%% 
\section{Density operator, pure states and mixed states} 

Sometimes is not possible to give a complete description of a quantum system. 
That may happen:
\begin{itemize}
\item because we have access only to a part of the system, for example, 
when we cannot observe all the products of a decay process, 
so that a complete description is \emph{actually impossible};
\item because the system is too wide or complicated, as a macroscopic system, 
making a complete description \emph{practically impossible};
\item because we are only interested in one part of the system: 
a complete description is \emph{unwanted}.
\end{itemize} 

To begin the discussion with the simplest cases of a \emph{bi-partite system}, 
consisting of subsystems $A$ and $B$, 
we imagine to divide the total Hilbert space ${\cal H}$ into 
two subspaces: ${\cal H}_A$, to which we do have access, and ${\cal H}_B$, 
to which we do not.~\footnote{Even without further information about subsystem $B$, 
we assume in the following that its Hilbert space ${\cal H}_B$ is separable.}
However, in general, the state of the composite system cannot be factorized 
in the form $|\Psi \rangle = |\psi_A \rangle \otimes |\psi_B \rangle $, i.e.,  
we cannot describe completely the subsystem $A$ (or $B$) alone. 

Nevertheless, we are interested to evaluate, for example, the expectation values of
the observables acting on ${\cal H}_A$ using the information we have.

%%%%%%%%%%%%%%%%%%%%%%%%%%%%% 
\subsection{Density operator} 

The expectation value of an operator ${\hat A}: {\cal H}_A \rightarrow {\cal H}_A$, 
as defined in (\ref{expect}), recalling also (\ref{Psi}), is:
\begin{eqnarray*}
\langle \hat{A} \rangle &\equiv& \langle \Psi|\hat{A}|\Psi \rangle \\
&=& \sum_{a,a',b,b'} \langle \Psi |a',b' \rangle \langle a',b'| \hat{A} | a,b \rangle \langle a,b| \Psi \rangle \\
&=& \sum_{a,a',b} \langle \Psi |a',b \rangle \langle a'| \hat{A} | a \rangle \langle a,b| \Psi \rangle.
\end{eqnarray*}
We have used the closure relation $\sum_{a,b} | a,b \rangle \langle a,b| = \hat{\mathbb I}$ 
in the second line and, in the third, the equality 
$$
\langle a,b| \hat{A} | a',b' \rangle = \langle a| \hat{A} | a' \rangle \langle b|b' \rangle = \langle a| \hat{A} | a' \rangle \delta_{bb'},
$$
due to the fact that $\hat{A}$ acts only on ${\cal H}_A$.

Here it is appropriate to introduce the \emph{density operator}:
\begin{equation} \label{rho}
\hat{\rho}_{aa'} \equiv \sum_b \langle a,b| \Psi \rangle \langle \Psi| a',b \rangle, 
\end{equation}
i.e., by its matrix elements; this may require additional considerations 
if non-separable spaces are admitted, which we presently do not pursue, for simplicity.
With this powerful tool the above expectation value of observable $\hat{A}$ 
can be expressed in the very elegant way
$$
\langle \hat{A} \rangle = \mbox{Tr}(\hat{\rho} \hat{A}),
$$ 
where the trace is over all states of the composite system.  

The density operator has the following three properties:
\begin{enumerate} \label{density}
 \item Normalized: $\mbox{Tr}(\hat{\rho})=1$;
 \item Hermitian: $\hat{\rho}=\hat{\rho}^{\dagger}$;
 \item Positive semi-definite: $x^* \hat\rho\, x \geq 0, \forall x \in{\mathbb C}^N$.~\footnote{Actually, for the density operator as we 
introduced it, holds the even stronger property $\hat{\rho}_{aa'} \geq 0$, which implies the positive-semidefiniteness.}
\end{enumerate}
The demonstrations are trivial, given the explicit form of $\hat\rho$ above. In general, 
these properties are considered to \emph{define} a density operator. 

%%%%%%%%%%%%%%%%%%%%%%%%%%%%%%%%%%
\subsection{Schmidt decomposition}

Consider a bipartite system ${\cal H} = {\cal H}_A \otimes {\cal H}_B$ and the two bases $\{ |a\rangle \}$ for ${\cal H}_A$
and $\{ |b\rangle \}$ for ${\cal H}_B$.

First, let us define the \emph{reduced density operator} for a subsystem, which is given by the partial trace of 
$\hat\rho$ over the other subsystems; for example, in this case we have
$$
\hat\rho^{(A)} \equiv \mbox{Tr}_B \, \hat\rho.
$$

The most general state $|\psi\rangle$ for the system ${\cal H}$ and the associated density operator are:
\begin{align*}
|\psi\rangle &= \sum_{a,b} c_{ab} |a\rangle |b\rangle,\\
\hat\rho &= \sum_{a,b,a',b'} c_{ab} c^*_{a'b'} |a\rangle |b\rangle \langle a'| \langle b'|.
\end{align*}

The reduced density operators for the subsystems are:
\begin{equation} \label{reduced}
\begin{split}
 \hat\rho^{(A)} &= \mbox{Tr}_B \, \hat\rho = \sum_b \langle b| \hat\rho |b \rangle \\
&= \sum_b \langle b| \left( \sum_{a,b,a',b'} c_{ab} c^*_{a'b'} |a\rangle |b\rangle \langle a'| \langle b'| \right) |b \rangle \\
&= \sum_{a,a',b} c_{ab} c^*_{a'b} |a\rangle \langle a'|, \\
 \hat\rho^{(B)} &= \sum_{a,b,b'} c_{ab} c^*_{ab'} |b\rangle \langle b'|. \\
\end{split}
\end{equation}

We can choose, for the $\{ |a\rangle \}$, an orthonormal set of eigenvectors for $\hat\rho^{(A)}$.
Then, defining \emph{Schmidt coefficients} $C_a \equiv \sum_b |c_{ab}|^2$ (eigenvalues of $\hat\rho^{(A)}$),
from (\ref{reduced}) we have:
\[
 \hat\rho^{(A)} = \sum_{a,b} |c_{ab}|^2 |a\rangle \langle a| = \sum_a C_a |a\rangle \langle a|.
\]
We can also introduce the normalized vectors $|B_a\rangle \equiv \frac{1}{\sqrt{C_a}} \sum_b c_{ab} |b\rangle \in {\cal H}_B$,
which allow us to write:
\begin{align*}
 |\psi \rangle &= \sum_a \sqrt{C_a} |a\rangle |B_a\rangle, \\
 \hat\rho &= \sum_a C_a |a\rangle |B_a\rangle \langle B_{a}| \langle a|, \\
 \hat\rho^{(B)} &= \sum_a C_a |B_a \rangle \langle B_a|.
\end{align*}
So we see that the $|B_a\rangle$ are eigenvectors for $\hat\rho^{(B)}$: the two reduced density operators
are simultaneously diagonal and they have the same eigenvalues $C_a$.

The procedure used above is called \emph{Schmidt decomposition} and allows us to express a pure state
and its density operator in terms of a sum over one index only. The sum is limited by the dimension
of the smallest of the subspaces, in fact we have implicitly assumed $\dim {\cal H}_A \leq \dim {\cal H}_B$.

When a state is expressed in its Schmidt form, it is straightforward to determine if it is factorized or not:
if only one Schmidt coefficient is nonvanishing, the state is factorized; otherwise it is \emph{entangled} (not factorized).

%%%%%%%%%%%%%%%%%%%%%%%%%%%%%%%%%% 
\subsection{Pure and mixed states} 

Those $\langle a,b|\Psi \rangle$ appearing (with their complex conjugates) in (\ref{rho}) are nothing but
the coefficients of the expansion of $|\Psi \rangle$ with respect to the basis states of the form $|a\rangle \otimes |b\rangle$. 
If the state is factorized, $|\Psi\rangle = |\psi_A \rangle \otimes |\psi_B \rangle$, the coefficients depend on only one
subspace each. Then, defining $c_a\equiv\langle a|\Psi_A\rangle$, 
$c_b\equiv\langle b|\Psi_B\rangle$, we obtain from (\ref{rho}):
\begin{eqnarray*}
 \hat\rho_{aa'} &=& \sum_b c_a c_b c^*_{a'}c^*_b = c_a c^*_{a'} \sum_b |c_b|^2 = c_a c^*_{a'} \\
 \Rightarrow \qquad \hat\rho &=& |\psi_A \rangle \langle \psi_A|,
\end{eqnarray*}
because, even if we do not know the $c_b$, the closure relation still holds for 
the basis $\{ |b \rangle \}$.

Furthermore, we can choose a complete set for ${\cal H}_A$ including $|\psi_A \rangle$ 
to obtain $\hat\rho$ in diagonal form:
$$
\hat\rho = \left(
\begin{array}{cccccc}
 0            \\
 & \ddots     \\
 && 0         \\
 &&& 1        \\
 &&&& 0       \\
 &&&&& \ddots \\
\end{array} \right).
$$
If we know the wavefunction $|\psi_A \rangle$, 
which represents the state of the subsystem to which we have access, 
i.e., if the state of the total system ${\cal H}$ is factorized 
in terms of states of ${\cal H}_A$ and ${\cal H}_B$, we can say our information is complete, 
since the uncertainty, if any, about the outcome of measurements 
on $A$ does not depend on our ignorance. 

To identify these states, 
which we call \emph{pure states}, we introduce the following 
important property of the density operator:
$$
\hat\rho = \hat\rho^2 \quad \Longleftrightarrow \quad|\psi_A \rangle \mbox{ is a pure state}.
$$ 
That is, the density operator corresponding to a pure state is a \emph{projector}; 
in particular, we have $\hat\rho =|\Psi_A\rangle\langle\Psi_A|$ in the present case. 

We call \emph{mixed state} a generic state which, generally, may not be 
factorizable.~\footnote{According to this definition a pure state is a mixed state too.} 
In the latter case, we do not have a wavefunction on ${\cal H}_A$ and the system cannot be described but in terms of the density operator.

A situation of particular interest occurs when our missing information is due to 
the statistical nature of the system.
Let $\{ |\psi^{(i)} \rangle \}$ be a set of possible states for our system. 
Suppose we know the probability $w_i$ for the system 
to be in the $i$-th state.~\footnote{This probability is related to a statistical 
approach in studying the system and has nothing to do
with the intrinsic statistical (indeterministic) nature of quantum mechanics itself.}
If $\{ |\psi_n \rangle \}$ is a complete set, we can write:
$$
|\psi^{(i)} \rangle = \sum_n a^{(i)}_n |\psi_n \rangle.
$$
For the expectation value of an observable we have:
$$
\langle {\hat O} \rangle = \sum_i w_i \langle \psi^{(i)}|{\hat O}|\psi^{(i)} \rangle = \sum_{i,n,m} w_i \, a^{(i)*}_m a^{(i)}_n {\hat O}_{mn}.
$$
Now the density operator assumes the form:
$$
\hat\rho = \sum_i w_i |\psi^{(i)} \rangle \langle \psi^{(i)}|, 
\qquad \hat\rho_{nm} = \sum_i w_i \, a^{(i)*}_m a^{(i)}_n.
$$
Thus, $\hat\rho$ is the weighted (by $w_i$) mean value of the 
$\hat\rho$ of the single states $|\psi^{(i)} \rangle$.
Since $\{|\psi^{(i)} \rangle \}$ is a complete set, 
we can represent the density matrix with respect to it in diagonal form:
$$
\hat\rho = \left(
\begin{array}{cccc}
 w_1        \\
 & \ddots   \\
 && w_n     \\
 &&& \ddots \\
\end{array} \right).
$$

%%%%%%%%%%%%%%%%%%%%%%
\subsection{Bell basis} \label{sbell}
A \emph{qubit} is a two-states quantum system. It is the minimal unit of quantum information.
Typical realizations are an atom with its ground state and one of its excited states, a photon
with its two polarization states (for fixed frequency and direction) or an electronic spin with its
``up'' and ``down'' states.
A commonly used basis for qubit representation is the \emph{Pauli basis}:
$$
\{ |\uparrow \rangle , |\downarrow \rangle \}.
$$

Among all possible bipartite systems, two-qubit systems are of particular importance
for their facility to be treated theoretically and experimentally.
To represent two qubits we shall often use the \emph{Bell basis}:
\begin{equation} \label{bell}
\begin{split}
 | \Phi^\pm \rangle \equiv \frac{| \uparrow \uparrow \rangle \pm |\downarrow \downarrow \rangle}{\sqrt{2}}, \\
 | \Psi^\pm \rangle \equiv \frac{| \uparrow \downarrow \rangle \pm |\downarrow \uparrow \rangle}{\sqrt{2}}.
\end{split}
\end{equation}
The states composing a Bell basis have the peculiarity of being maximally entangled; since entanglement is,
as we shall see, a factorization-dependent property, it is understood
that a Bell state remains entangled as long as we do not change the choice of the decomposition for the
total system.

We shall now clarify what we mean with \emph{maximal entanglement}. A varitey of measures have been introduced
to quantify the amount of entanglement, mostly based on the Shannon or von Neumann
entropy, applied to a density matrix; the interested reader will find more in \cite[Sect. 7.1.3]{diosi} or,
for the two-qubit case, in \cite{ishizara}. Here we shall just notice that the entanglement measure
(\emph{entanglement of formation}) most
commonly used in the literature is a quantity bounded below and above, so maximal
entanglement is a well defined concept for a fixed factorization.

%%%%%%%%%%%%%%%%%%%%%%%%%%%%%%%%%%% 
\subsection{Some brief conclusions} 

Following the presentation in Ref.~\cite{konishi}, we have introduced the density 
operator in a theoretical and general approach.
We have also defined pure and mixed states in relation to the possibility to 
factorize them in a certain representation of a composite system.~\footnote{We did 
not begin with their relation to particular properties of the density operator, 
although this presents an alternative, as we have seen through the differing forms 
of the diagonalized density matrix.} Whether this 
possibility of factorization exists or not is not completely dictated by 
the ``book of nature'', since factorizability is a matter of choice of the 
representation, as we shall see in more detail. 

But, given  a state, its purity does not depend on the set of eigenvectors chosen to span the space, 
i.e.~on the choice of the observables, since the property $\hat\rho = \hat\rho^2$ still holds under basis changes.

In the next chapter the concept of entanglement will be introduced.
Some results from Refs.~\cite{harshman} and~\cite{caltagirone}
about the possibility of finding and creating entanglement for the case of pure states will be exposed.

%%%%%%%%%%%%%%%%%%%%%%%%%%%%%%%%%%%%%
\chapter{Entanglement in pure states} \label{chap2}

%%%%%%%%%%%%%%%%%%%%%
\section{Entanglement and separability} 

Let us consider a bipartite system $S_{AB} = S_A \otimes S_B$ (the generalization to any multipartite system is trivial).
Each subsystem will be equipped with an algebra of observables, $\cal{A}$ and $\cal{B}$,
and the total algebra will be given by $\cal{A} \otimes \cal{B}$.
{\it Local} observables are those elements of the total algebra that can be expressed in the form $A \otimes {\mathbb I}_B$
(or ${\mathbb I}_A \otimes B$), i.e.~the observables only acting on one subsystem.
Entanglement is generally defined as quantum correlation between non-local observables,
but it can also be expressed in terms of correlation between systems, as we shall see immediately.

The state, pure or mixed, of each subsystem is described by its density matrix $\hat\rho^{(A)}$ ($\hat\rho^{(B)}$).
The state of the total system is the tensor product of the two subsystem states \emph{if and only if} there is neither statistical (classical)
nor quantum correlation:
$$
\hat\rho^{(AB)} = \hat\rho^{(A)} \otimes \hat\rho^{(B)} + \mbox {cl-corr.} + \mbox{q-corr.}
$$
If (and only if) non local q-correlation is absent, the state of the total system can be written as a mixture of uncorrelated states,
and the state is separable. We can then define:
\begin{definition}[Entanglement and separability]
A state $\hat\rho^{(AB)}$ of a bipartite system $S_{AB} = S_A \otimes S_B$ is \emph{separable}
if and only if it can be written in the form:
\begin{equation} \label{ent}
\hat\rho^{(AB)} = \sum_n w_n \hat\rho_n^{(A)} \otimes \hat\rho_n^{(B)},
\end{equation}
where the $w_n$ are classical probabilities, i.e. $w_n \geq 0 \, \forall n,
\sum_n w_n =1$. Otherwise, when the state cannot be written in this form, the
subsystems are said to be \emph{entangled}.~\footnote{See in particular Ref. \cite{diosi} sect. 2.5 and 4.5.}
\end{definition}
This definition can be easily generalized to multipartite systems.
Entanglement corresponds to q-correlation between non-local systems.

%%%%%%%%%%%%%%%%%%%
\section{Pure states as entanglement sources}

In the previous section we have defined entanglement as a property for a quantum system connected with the notion of non-locality.
Locality is only defined with respect to a TPS within the space that represents the system. In turn, a TPS is induced by the algebra 
of observables, so our next goal will be to provide sufficient criteria for the algebra of a finite-dimensional system to induce a TPS.

As an example, consider a generic pure state $|\psi \rangle \in {\cal H}$,
where ${\cal H}$ is an \emph{unstructured} $d$-dimensional H-space,
i.e.~a H-space with no TPS; we want to find a partitioning of ${\cal H}$ such that $|\psi \rangle$
has a certain desired amount of entanglement. We shall exploit the fact that all same-dimensional
H-spaces are isomorphic.~\footnote{In particular, since any $d$-dimensional H-space is isomorphic to
${\mathbb C}^d$, we will often refer to ${\mathbb C}^d$ in the rest of the chapter as to a generic
$d$-dimensional H-space, for simplicity.} Let us consider a pure state $|\phi \rangle$ in a \emph{model}
H-space, i.e.~another $d$-dimensional H-space equipped with a TPS according to our liking,
say ${\cal H}' = \bigotimes_i {\cal H}'_i$. The subsystems ${\cal H}'_i$ will have their own algebras of
observables ${\cal A}'_i$. Since ${\cal H} \cong {\cal H}'$, it is always possible to find an unitary map
$U : {\cal H} \rightarrow {\cal H}'$ such that $U |\psi \rangle = |\phi \rangle$, and $U$ will map
the ${\cal A}'_i$ back to algebras ${\cal A}_i = U {\cal A}'_i U^{\dagger}$ acting on ${\cal H}$.
We shall investigate the conditions on the ${\cal A}_i$ to induce a TPS on the previously unstructured H-space.

Our final purpose, concerning this chapter, is to demonstrate the \emph{Tailored observables theorem}
({\bf Theorem \ref{tailor}}, in {\bf Sect. \ref{stailor}}): observables can be constructed such that any pure state has any
amount of entanglement possible.

From now on we shall use some elementary algebraic definitions and properties, which are exposed in {\bf Appendix \ref{algebra}}.

\section{Induced tensor product structures}

Every subsystem of a total system is equipped with its own (local) algebra of observables; with the next
theorem we shall provide, for the case of finite-dimensional bipartite systems, sufficient criteria for those algebras to
induce a tensor product structure for the total system.

\begin{theorem}[Induced TPSs] \label{TPS}

Consider the full matrix algebra ${\cal M}_d$~\footnote{We will always denote with ${\cal M}_d$
the full matrix algebra on the finite-dimensional H-space ${\mathbb C}^d$,
i.e. ${\cal M}_d = \{(a_{ij})^d_{i,j = 1} | (\forall i, j \in \{ 1, \dots , d \}) (a_{ij} \in {\mathbb C})\}$.}
on the finite dimensional Hilbert space ${\cal H} = {\mathbb C}^d$
and two subalgebras ${\cal A}$ and ${\cal B}$ of ${\cal M}_d$, for which hold:
\begin{description}
 \item[(i)] Independence: $[{\cal A},{\cal B}] = {0}$, i.e. $[a,b] = 0 \; \forall \; (a \in {\cal A}, b \in {\cal B})$;
 \item[(ii)] Completeness: ${\cal A} \otimes {\cal B} \cong {\cal A} \vee {\cal B} = {\cal M}_d$.
\end{description}
Then ${\cal A}$ and ${\cal B}$ induce a TPS on ${\mathbb C}^d$, i.e., there exist two H-spaces ${\mathbb C}^k$ and ${\mathbb C}^l$,
$d = k \cdot l$, and an unitary mapping $U : {\mathbb C}^k \otimes {\mathbb C}^l \rightarrow {\mathbb C}^d$ such that
$$
{\cal A} = U({\cal M}_k \otimes {\mathbb I}_l) U^\dagger \mbox{\emph{ and }} {\cal B} = U({\mathbb I}_k \otimes {\cal M}_l) U^\dagger.
$$
In particular, ${\cal A}$ and ${\cal B}$ are isomorphic to ${\cal M}_k$ and ${\cal M}_l$ respectively.

\end{theorem}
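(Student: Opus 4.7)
My plan is to exploit the structure theorem for finite-dimensional $*$-subalgebras of a full matrix algebra (Wedderburn--Artin, or equivalently the classification of finite-dimensional von Neumann algebras), which I assume is available in the algebra appendix. Together with the double commutant theorem in finite dimensions, this gives a canonical block form for any such subalgebra, and the two hypotheses \textbf{(i)} and \textbf{(ii)} will force that block form to collapse to a single block.

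First, I would apply the structure theorem to $\mathcal{A}$: up to a unitary $V$ on $\mathbb{C}^d$, one can write $\mathbb{C}^d \cong \bigoplus_\alpha \mathbb{C}^{k_\alpha} \otimes \mathbb{C}^{m_\alpha}$ with $\mathcal{A} \cong \bigoplus_\alpha \mathcal{M}_{k_\alpha} \otimes \mathbb{I}_{m_\alpha}$ and commutant $\mathcal{A}' \cong \bigoplus_\alpha \mathbb{I}_{k_\alpha} \otimes \mathcal{M}_{m_\alpha}$. The center of $\mathcal{A}$ is then spanned by the central projections onto each summand. By hypothesis \textbf{(i)}, $\mathcal{B} \subseteq \mathcal{A}'$, so every central element of $\mathcal{A}$ commutes with everything in $\mathcal{A}$ and in $\mathcal{B}$, hence lies in the center of $\mathcal{A} \vee \mathcal{B}$. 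But by \textbf{(ii)} this algebra equals $\mathcal{M}_d$, whose center is trivial. This forces only one index $\alpha$ to appear, so $\mathcal{A}$ is a factor: there exist $k, l$ with $d = kl$ and a unitary $U$ with $\mathcal{A} = U(\mathcal{M}_k \otimes \mathbb{I}_l)U^\dagger$ and consequently $\mathcal{A}' = U(\mathbb{I}_k \otimes \mathcal{M}_l)U^\dagger$.

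Next I would pin down $\mathcal{B}$ inside $\mathcal{A}'$. Applying the same argument symmetrically (the center of $\mathcal{B}$ also lies in the center of $\mathcal{A} \vee \mathcal{B} = \mathcal{M}_d$, hence is trivial), $\mathcal{B}$ is itself a factor, so $\mathcal{B} \cong \mathcal{M}_m$ for some $m$. Since $\mathcal{B}$ is contained in $\mathcal{A}' \cong \mathcal{M}_l$ as a unital $*$-subalgebra that is itself a factor, a second application of the structure theorem (to $\mathcal{B}$ inside $\mathcal{A}'$) shows that up to a further unitary acting only on the second tensor factor, $\mathcal{B}$ takes the form $U(\mathbb{I}_k \otimes \mathcal{M}_m \otimes \mathbb{I}_{l/m})U^\dagger$. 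Now I invoke the dimensional part of \textbf{(ii)}: the isomorphism $\mathcal{A} \otimes \mathcal{B} \cong \mathcal{A} \vee \mathcal{B} = \mathcal{M}_d$ gives $k^2 m^2 = d^2 = k^2 l^2$, so $m = l$ and $\mathcal{B} = U(\mathbb{I}_k \otimes \mathcal{M}_l)U^\dagger$, as required.

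The main obstacle is the first step: getting the block decomposition of $\mathcal{A}$ and then arguing that \textbf{(i)}+\textbf{(ii)} kills all but one block. The symmetry argument for $\mathcal{B}$ and the final dimension count are then essentially bookkeeping, provided one is willing to absorb a second unitary into $U$. One subtlety worth flagging is that the theorem implicitly treats $\mathcal{A}$ and $\mathcal{B}$ as unital $*$-subalgebras; if they are merely associative subalgebras, one should first note that the algebra of observables is by definition closed under the adjoint and contains the identity, so no generality is lost.
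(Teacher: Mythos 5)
Your proof is correct and shares the overall architecture of the paper's: decompose ${\cal A}$ by the Wedderburn/representation structure theorem into blocks ${\cal M}_{k_\alpha}$ with multiplicities, note that \textbf{(i)} gives ${\cal B} \subseteq {\cal A}'$, use \textbf{(ii)} to collapse the decomposition to a single block, and then identify ${\cal B}$ with ${\cal A}'$. You diverge in two places. To kill all but one block, the paper runs a dimension count, $d^2 = (\sum_i a_i k_i)^2 = \dim {\cal A}\vee{\cal B} \le \dim{\cal A}\vee{\cal A}' = \sum_i a_i^2 k_i^2 \le d^2$, whose forced chain of equalities gives $n=1$; you instead observe that a nontrivial central projection of ${\cal A}$ would commute with ${\cal A}\vee{\cal B} = {\cal M}_d$ and hence be scalar. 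Your center argument is cleaner and sidesteps the inequality bookkeeping, while still yielding that ${\cal A}$ is a factor and hence of the form $U({\cal M}_k\otimes{\mathbb I}_l)U^\dagger$. To pin down ${\cal B}$, the paper invokes the double centralizer lemma ({\bf Lemma \ref{double}}) to get $\dim{\cal A}' = \dim{\cal M}_d/\dim{\cal A} = \dim{\cal B}$ and concludes ${\cal B}={\cal A}'$ from the inclusion; you re-run the structure theorem for ${\cal B}$ inside ${\cal A}'\cong{\cal M}_l$ and then match dimensions. This works, and the ``further unitary'' you worry about is harmless: once $m=l$ it conjugates all of ${\cal M}_l$ onto itself, so $U$ need not be modified. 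You could also shortcut that step entirely, since \textbf{(ii)} already gives $\dim{\cal B} = d^2/k^2 = l^2 = \dim{\cal A}'$, and ${\cal B}\subseteq{\cal A}'$ then forces equality — which is exactly the paper's route. Your closing caveat about unitality and $*$-closure is well taken; the paper's proof needs the same implicit assumptions for the block decomposition of ${\cal A}$ to apply.
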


\emph{About the requirements}. Requirement {\bf (i)} concerns \emph{separability}, guaranteeing the possibility to perform
measurements on one subsystem without affecting the other. Requirement {\bf (ii)} ensures that the TPS of the algebra induces
a corresponding TPS on the total state space. Another necessary requirement, not affecting the mathematical structure but fundamental
from a physical point of view, is local accessibility of the observables.

\begin{proof}
See {\bf Appendix \ref{proof}}.
\end{proof}

The generalization of this statement to multipartite systems, known under the name of \emph{Zanardi's Theorem},
is quite simply achieved by induction from the bipartite case \cite[Cor. 3]{harshman}.

\section{The \emph{Tailored observables theorem}} \label{stailor}

We have just seen under which conditions the algebras can induce a TPS on the total system.
Our last step will be to demonstrate that, for any pure state over a (finite-dimensional) H-space,
there always exist two subalgebras inducing a TPS such that the state has any possible amount of entanglement.
In order to do so, we need the following

\begin{lemma} \label{su2}

The (associative) algebra generated by any representation of $\mathfrak{su}(2)$ on ${\mathbb C}^d$, with $d \in \mathbb{N}
\setminus \{0,1\}$~\footnote{The so called spin representation: if we define $d \equiv 2s+1$, $s \in \{ \frac{1}{2}, 1, \frac{3}{2}, \cdots \}$,
$d$ is the number of spin eigenvalues $-s, -s+1, \cdots, s$.} is the full matrix algebra ${\cal M}_d$ 
\emph{\cite[Th. 5]{harshman}}.

\end{lemma}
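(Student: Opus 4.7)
The plan is to exhibit, inside the associative (complex) algebra ${\cal A}$ generated by the spin-$s$ representation of $\mathfrak{su}(2)$ on ${\mathbb C}^d$ with $d=2s+1$, every matrix unit $E_{ij}=|i\rangle\langle j|$ expressed in the standard basis $\{|m\rangle\}_{m=-s,\ldots,s}$ of $S_z$-eigenvectors. Since these matrix units span the full matrix algebra ${\cal M}_d$, this will establish ${\cal A}={\cal M}_d$.

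First I would extract the diagonal projectors. The generator $S_z$ is represented diagonally in this basis, with the $d$ pairwise distinct eigenvalues $m\in\{-s,\ldots,s\}$. By Lagrange interpolation, for each $m$ there is a polynomial $p_m(t)$ of degree at most $d-1$ with $p_m(m')=\delta_{mm'}$, so that $p_m(S_z)=E_{mm}$. Each such polynomial in $S_z$ lies in ${\cal A}$, hence so do all diagonal matrix units $E_{mm}$.

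Next I would use the ladder operators $S_\pm=S_x\pm iS_y$, which belong to ${\cal A}$ once we pass to the complex associative algebra generated by the (real) Lie algebra. The standard relation $S_+|m\rangle=\sqrt{(s-m)(s+m+1)}\,|m+1\rangle$ has a strictly positive coefficient for every $m\in\{-s,\ldots,s-1\}$, so
\[
E_{m+1,m+1}\,S_+\,E_{m,m}=\sqrt{(s-m)(s+m+1)}\;E_{m+1,m}\in{\cal A}.
\]
Hence every super-diagonal matrix unit $E_{m+1,m}$ is in ${\cal A}$, and an analogous argument with $S_-$ produces the sub-diagonals $E_{m,m+1}$. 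An arbitrary matrix unit $E_{ij}$ is then obtained as a product of consecutive neighbour units (with diagonal projectors inserted to pin down the endpoints), which remains in ${\cal A}$.

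No single step is particularly hard; the only delicate point is the scope of the algebra. One must work over ${\mathbb C}$ so that $S_\pm$ actually lie in ${\cal A}$, and one must use irreducibility of the spin-$s$ representation in order to know that $S_z$ has $d$ simple eigenvalues. As an alternative route, one could invoke Burnside's theorem (or the double commutant theorem): irreducibility forces the commutant of the representation to be ${\mathbb C}\,\hat{\mathbb I}$, whence ${\cal A}={\cal A}''={\cal M}_d$. I would nonetheless prefer the explicit construction above, since it produces a concrete generating set of matrix units rather than relying on an abstract duality, which fits the computational spirit of the rest of the thesis.
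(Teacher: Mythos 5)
Your proof is correct. Note first that the paper does not actually prove this lemma: it is quoted from the literature (Harshman--Ranade, Th.~5) and used as a black box in the proof of Theorem~\ref{tailor}, so your explicit construction is a genuinely self-contained addition rather than a variant of an argument in the text. The route you take --- Lagrange interpolation in $S_z$ to get the diagonal matrix units $E_{mm}$, then sandwiching $S_\pm$ between them to get the off-diagonal units $E_{m\pm1,m}$, then multiplying neighbours --- is the standard constructive proof and all the steps check out: $E_{aa}XE_{bb}=X_{ab}E_{ab}$, and the coefficients $\sqrt{(s-m)(s+m+1)}$ are indeed nonzero in the relevant range (in fact all you need is $\langle m+1|S_+|m\rangle\neq 0$, which holds for any choice of normalization by irreducibility). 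Your alternative via Schur's lemma plus Burnside (or the bicommutant, using that the generated algebra is a $*$-algebra since the $S_i$ are hermitian) is shorter but less informative; the constructive version has the advantage of exhibiting the matrix units explicitly, which matches how the lemma is actually used in the example of Chapter~\ref{chap2}. Two small points worth making explicit. First, irreducibility is essential and is exactly what the footnote's restriction to the spin representation encodes: for a reducible representation (e.g.\ a direct sum of two irreps) the generated algebra is block diagonal and the lemma fails, so the word ``any'' in the statement must be read through the footnote. Second, if one insists on the non-unital convention for ``algebra generated by'' (only nonempty products of generators), the constant terms in the interpolating polynomials require $\hat{\mathbb I}$ to be produced first; this follows from the Casimir relation $S_x^2+S_y^2+S_z^2=s(s+1)\,\hat{\mathbb I}$ with $s>0$, so the gap is only apparent.
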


We can now demonstrate the already mentioned \emph{Tailored observables theorem}, which also provides a constructive method, virtually
implementable in a finite number of steps, to construct the factorization that induces the necessary notion of locality,
in order to create any possible amount of entanglement for a given (pure) state.

\begin{theorem}[Tailored observables] \label{tailor}

Let ${\cal H} = {\mathbb C}^d$ be a H-space with an orthonormal basis $\{ |i \rangle \}$, and take
$k, l \in {\mathbb N}$ such that $d = k \cdot l.$
Then, for every (pure) state $| \Psi \rangle = \sum^d_{i=1} c_i|i \rangle$ and every $\lambda_1, \cdots ,\lambda_{min \{k,l\}} \in {\mathbb C}$
with $\sum^d_{i=1} |c_i|^2 = \sum^{min \{k,l\}}_{i=1} |\lambda_i|^2$, there exist algebras ${\cal A}$ and ${\cal B}$ satisfying the conditions
of {\bf Theorem \ref{TPS}}, and an unitary operator $U$ such that
$$
| \Psi \rangle = U \sum^{min \{k,l\}}_{i=1} \lambda_i |i \rangle_A |i \rangle_B,
$$
with $\{|i \rangle_A \}$ and $\{|i \rangle_B \}$ orthonormal bases for H-spaces
${\cal H}_A = {\mathbb C}^k$ and ${\cal H}_B = {\mathbb C}^l$, respectively.

\end{theorem}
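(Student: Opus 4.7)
The plan is to pull a model tensor product structure back to the unstructured space $\mathbb{C}^d$ via a unitary tailored so that $|\Psi\rangle$ acquires the desired Schmidt spectrum. First, fix the target state in the model space $\mathbb{C}^k \otimes \mathbb{C}^l$,
$$|\Phi\rangle \equiv \sum_{i=1}^{\min\{k,l\}} \lambda_i |i\rangle_A |i\rangle_B,$$
which by construction has Schmidt coefficients exactly $\lambda_1,\dots,\lambda_{\min\{k,l\}}$ with respect to the canonical factorization of the model space. The hypothesis $\sum_i |c_i|^2 = \sum_i |\lambda_i|^2$ guarantees $\||\Psi\rangle\| = \||\Phi\rangle\|$, and since both spaces have the same dimension $d$, the assignment $|\Phi\rangle \mapsto |\Psi\rangle$ can be extended to a unitary $U : \mathbb{C}^k \otimes \mathbb{C}^l \to \mathbb{C}^d$ by normalising the two vectors, completing each to an orthonormal basis (for instance by Gram--Schmidt), and declaring $U$ to be the basis-to-basis map.

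Next, define the candidate subalgebras of $\mathcal{M}_d$ by conjugation,
$$\mathcal{A} \equiv U\bigl(\mathcal{M}_k \otimes \mathbb{I}_l\bigr) U^\dagger, \qquad \mathcal{B} \equiv U\bigl(\mathbb{I}_k \otimes \mathcal{M}_l\bigr) U^\dagger.$$
Unitary conjugation preserves commutators, so $[\mathcal{A},\mathcal{B}] = 0$ follows immediately from $[\mathcal{M}_k \otimes \mathbb{I}_l,\, \mathbb{I}_k \otimes \mathcal{M}_l] = 0$, which yields requirement \textbf{(i)} of Theorem \ref{TPS}. It also preserves joins, so $\mathcal{A} \vee \mathcal{B} = U\bigl(\mathcal{M}_k \otimes \mathcal{M}_l\bigr) U^\dagger = U\,\mathcal{M}_d\, U^\dagger = \mathcal{M}_d$, yielding requirement \textbf{(ii)}. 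Theorem \ref{TPS} then asserts that $\mathcal{A}$ and $\mathcal{B}$ induce a TPS on $\mathbb{C}^d$, and the very same $U$ serves as the intertwining unitary. Because $U^\dagger|\Psi\rangle = |\Phi\rangle$, the state $|\Psi\rangle$ has exactly the prescribed Schmidt coefficients $\lambda_i$ relative to this induced factorization.

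To render the construction physically transparent rather than merely abstract, I would invoke Lemma \ref{su2} to generate $\mathcal{M}_k$ and $\mathcal{M}_l$ concretely from $\mathfrak{su}(2)$ representations of spins $(k-1)/2$ and $(l-1)/2$; pushing the finitely many spin generators through $U$ then provides an explicit set of Hermitian (spin-like) observables on $\mathbb{C}^d$ that generate $\mathcal{A}$ and $\mathcal{B}$, confirming the constructive character claimed in the statement. The main obstacle is essentially displaced into the preceding machinery: Theorem \ref{TPS} supplies the hard direction by turning algebraic independence and completeness into an honest TPS, and Lemma \ref{su2} legitimises working with a finite collection of spin observables. Granted those tools, the remaining verification reduces to dimensional bookkeeping, the behaviour of algebras under unitary conjugation, and the uniqueness (up to phases and permutations) of the Schmidt decomposition of $|\Phi\rangle$ in the model space, which is what guarantees that the prescribed $\lambda_i$ really are the Schmidt coefficients of $|\Psi\rangle$ in the induced TPS.
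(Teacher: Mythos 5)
Your proposal is correct and follows essentially the same route as the paper's own proof: fix the target Schmidt-form state in a model space $\mathbb{C}^k\otimes\mathbb{C}^l$, extend $|\Phi\rangle\mapsto|\Psi\rangle$ to a unitary via Gram--Schmidt completion of bases, and pull the canonical subalgebras (generated by $\mathfrak{su}(2)$ spin representations, per Lemma \ref{su2}) back by conjugation. If anything, your direct verification that unitary conjugation preserves the commutation and completeness conditions of Theorem \ref{TPS} is slightly more explicit than the paper's appeal to Lemma \ref{su2} alone.
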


\emph{Note}. The condition for $| \Psi \rangle$ to be expressible as a wavefunction implies its purity.
This hypothesis is central for the results exposed in this chapter, and will not be assumed in the rest of our work.

\begin{proof}
See {\bf Appendix \ref{proof}}.
\end{proof}

As for {\bf Theorem \ref{TPS}}, the generalization of this last result to any multipartite case is obtained
from the bipartite one by iteration, except we cannot rely on the Schmidt decomposition \cite[Cor. 7]{harshman}.

At least in principle,
for finite-dimensional systems all entanglement properties are still valid, even if, for higly partitioned systems,
it can get more and more complicated, from the computational point of view, to pratically build the desired subalgebras.

In conclusion, we have seen how it is always possilble to find (or provide) a factorization
of the algebra of observables such that any pure state has any admittable amount of entanglement.
In the next chapter we shall go further, considering the case of mixed states, which is the main theme of this work.

%%%%%%%%%%%%%%%%%
\section{Example}

Let us consider the simplest case of a H-space ${\cal H} \equiv {\mathbb C}^4$ which is \emph{unstructured}, i.e.~it has no TPS, with an
orthonormal basis $\{|0 \rangle, |1 \rangle, |2 \rangle, |3 \rangle \}$, and a pure state $|\psi \rangle \equiv |0 \rangle$.
Let us also consider a model H-space ${\cal H}' = {\cal H}_{A'} \otimes {\cal H}_{B'} \equiv {\mathbb C}^2 \otimes {\mathbb C}^2$.
Given any couple of bases $\{ |0\rangle_{A'},|1\rangle_{A'} \}$ for ${\cal H}_{A'}$ and $\{ |0\rangle_{B'},|1\rangle_{B'} \}$ for ${\cal H}_{B'}$,
we can build a basis $\{ |0\rangle',|1\rangle',|2\rangle',|3\rangle' \}$ for ${\cal H}'$ by setting, for example:
$$
|2i+j \rangle' \equiv |i\rangle_{A'} \otimes |j\rangle_{B'}.
$$

We now want to tailor algebras of observables (over ${\cal H}$) ${\cal A}$ and ${\cal B}$ that induce a partitioning such that
$|\psi \rangle$ has the same entanglement as a certain (arbitrary) $|\phi \rangle \in {\cal H}'$,
say $|\phi \rangle = \lambda_1 |0\rangle' + \lambda_2 |3\rangle'$, with $\lambda_1, \lambda_2 \in {\mathbb C}$,
and $|\lambda_1|^2 + |\lambda_2|^2 = 1$ as $|\phi \rangle$ is normalized. 
In order to do so, we shall choose an unitary operator $U$ that maps $| \psi \rangle$ to $| \phi \rangle = U | \psi \rangle$.
A simple (maybe the simplest) choice can be:
$$
U = \left(
\begin{array}{cccc}
\lambda_1 &0 &0 &\lambda_2 \\
0 &1 &0 &0 \\
0 &0 &1 &0 \\
-\lambda_2 &0 &0 &\lambda_1 \\
\end{array} \right),
$$ 
but others are possible; this freedom in the choice of $U$ could be exploited if there were additional
practical constraints on the kind of measurements to take into account.

At this point we are able to use the new operator to map back ${\cal A}'$ and ${\cal B}'$ to ${\cal A} $ and ${\cal B}$:
$$
{\cal A} = U^{\dagger} {\cal A}' U, \qquad {\cal B} = U^{\dagger} {\cal B}' U. 
$$
Thanks to {\bf Lemma \ref{su2}} we know that ${\cal A}'$ can be generated by the operators:~\footnote{$\sigma_i$ are the well-known Pauli matrices,
see {\bf Sect. \ref{pauli}}.}

\[
\begin{split}
S^{A'}_x &= \frac{1}{2} \sigma_x \otimes {\mathbb I}_2 = \frac{1}{2} \left(
\begin{array}{cccc}
 0 & 0 & 1 & 0 \\
 0 & 0 & 0 & 1 \\
 1 & 0 & 0 & 0 \\
 0 & 1 & 0 & 0 \\
\end{array} \right), \\
S^{A'}_y &= \frac{1}{2} \sigma_y \otimes {\mathbb I}_2 = \frac{1}{2} \left(
\begin{array}{cccc}
 0 & 0 & -i & 0 \\
 0 & 0 & 0 & -i \\
 -i & 0 & 0 & 0 \\
 0 & -i & 0 & 0 \\
\end{array} \right), \\
S^{A'}_z &= \frac{1}{2} \sigma_z \otimes {\mathbb I}_2 = \frac{1}{2} \left(
\begin{array}{cccc}
 1 & 0 & 0 & 0 \\
 0 & 1 & 0 & 0 \\
 0 & 0 & -1 & 0 \\
 0 & 0 & 0 & -1 \\
\end{array} \right).
\end{split}
\]
Similarly ${\cal B}'$ can be generated by:

\[
\begin{split}
S^{B'}_x &= \frac{1}{2} {\mathbb I}_2 \otimes \sigma_x = \frac{1}{2} \left(
\begin{array}{cccc}
 0 & 1 & 0 & 0 \\
 1 & 0 & 0 & 0 \\
 0 & 0 & 0 & 1 \\
 0 & 0 & 1 & 0 \\
\end{array} \right), \\
S^{B'}_y &= \frac{1}{2} {\mathbb I}_2 \otimes \sigma_y = \frac{1}{2} \left(
\begin{array}{cccc}
 0 & -i & 0 & 0 \\
 i & 0 & 0 & 0 \\
 0 & 0 & 0 & -i \\
 0 & 0 & i & 0 \\
\end{array} \right), \\
S^{B'}_z &= \frac{1}{2} {\mathbb I}_2 \otimes \sigma_z = \frac{1}{2} \left(
\begin{array}{cccc}
 1 & 0 & 0 & 0 \\
 0 & -1 & 0 & 0 \\
 0 & 0 & 1 & 0 \\
 0 & 0 & 0 & -1 \\
\end{array} \right).
\end{split}
\]
So we can easily calculate the generators of the tailored subalgebras. For ${\cal A}$ we have:

\[
\begin{split}
S^{A}_x &= U^{\dagger} S^{A'}_x U = \frac{\lambda_1}{2} \sigma_x \otimes {\mathbb I}_2 + \frac{\lambda_2}{2} \sigma_z \otimes \sigma_x \\
&= \frac{1}{2} \left(
\begin{array}{cccc}
 0 & \lambda_2 & \lambda_1 & 0 \\
 \lambda_2 & 0 & 0 & \lambda_1 \\
 \lambda_1 & 0 & 0 & -\lambda_2 \\
 0 & \lambda_1 & -\lambda_2 & 0 \\
\end{array} \right), \\
S^{A}_y &= U^{\dagger} S^{A'}_y U = \frac{\lambda_1}{2} \sigma_y \otimes {\mathbb I}_2 - \frac{\lambda_2}{2} \sigma_z \otimes \sigma_y \\
&= \frac{1}{2} \left(
\begin{array}{cccc}
 0 & i\lambda_2 & -i\lambda_1 & 0 \\
 -i\lambda_2 & 0 & 0 & -i\lambda_1 \\
 i\lambda_1 & 0 & 0 & -i\lambda_2 \\
 0 & i\lambda_1 & i\lambda_2 & 0 \\
\end{array} \right), \\
  S^{A}_z &= U^{\dagger} S^{A'}_z U \\
&= \frac{1}{2} [\lambda^2_1 \sigma_z \otimes {\mathbb I}_2 - \lambda^2_2 {\mathbb I}_2 \otimes \sigma_z
- \lambda_1 \lambda_2 \sigma_x \otimes \sigma_x + \lambda_1 \lambda_2 \sigma_y \otimes \sigma_y ] \\
&= \frac{1}{2} \left(
\begin{array}{cccc}
 \lambda^2_1 - \lambda^2_2 & 0 & 0 & -2 \lambda_1 \lambda_2 \\
 0 & 1 & 0 & 0 \\
 0 & 0 & -1 & 0 \\
 -2 \lambda_1 \lambda_2 & 0 & 0 & -\lambda^2_1 + \lambda^2_2 \\
\end{array} \right). \\
\end{split}
\]
We can obtain the generators of ${\cal B}$ in the same way, transposing the order of the Pauli matrices.

We have then tailored the desired subalgebras over the unstructured H-space ${\cal H}$. It is easy to verify that ${\cal A}$ and ${\cal B}$
satisfy the conditions of {\bf Theorem \ref{TPS}} and can always be chosen so that any pure state $| \psi \rangle$
has any amount of entanglement desired, from the situation where there is no entanglement ($\lambda_1 = 0 \, \vee \, \lambda_2 = 0$)
to the one where entanglement is maximal ($\lambda_1 = \lambda_2 = \frac{1}{\sqrt{2}}$).

%%%%%%%%%%%%%%%%%%
\chapter{Entanglement in mixed states} \label{chap3}

In this chapter, we treat the case of mixed states, elaborating the recent exposition in \cite{thirring}.
Here the situation is more complex than for pure states: 

\begin{itemize}
    \item concerning {\bf separability}, nothing changes, i.e., it is always possible to find a factorization such that any given mixed state is separable;
    \item concerning {\bf entanglement}, the statement demonstrated in the previous chapter does not hold in general for mixed states;
for example, it is immediate to see that the \emph{tracial state} $\hat\rho = \frac{1}{d} {\mathbb I}_d$ is separable for any
factorization.~\footnote{See eq. (\ref{tracial}).}
\end{itemize}

From now on we shall work with a bipartite H-space ${\cal H} = {\cal A} \otimes {\cal B}$
with dimension $D = d_1 \times d_2$; for simplicity, we will often set $d_1 = d_2 \equiv d$.
Some of the properties we are going to discuss can be extended to generic multipartite systems,
but this is beyond the scope of our present work.

Mixed states ore only expressible in terms of density matrices, elements of the algebra of observables which,
as far as we consider finite dimensions, is an algebra of hermitian matrices
equipped with the scalar product $\langle A| B \rangle \equiv \mbox{Tr}(A^\dagger B) = \mbox{Tr}(A B)$
and the corresponding norm (\emph{Frobenius norm}, or \emph{Hilbert-Schmidt norm})
$\lVert A \rVert \equiv \sqrt{\langle A| A \rangle}$.
Such a norm can be used to define a \emph{Hilbert-Schmidt distance} for the algebra:
\begin{equation} \label{dist}
\delta (A,B) \equiv \lVert A - B \rVert = \sqrt{\mbox{Tr}[(A - B)^2]}.
\end{equation}
Of course, other norms and distances could be defined, but this is not our purpose here.

\section{Separability}

The following theorem states that, for any state, it is always possible to find a factorization
such that it appears separable.

\begin{theorem}[Separability for mixed states]
Given any state $\hat\rho$ in a $D$-dimensional H-space, and any factorization ${\cal M}$ of the algebra of observables,
it is always possible to find an unitary operator $U$ such that $\hat\rho$ appears separable with respect to $U^\dagger \! {\cal M} U$.
\end{theorem}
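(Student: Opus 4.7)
The plan is to reduce the statement to the elementary observation that a density matrix which is diagonal in some product basis is automatically separable, so it suffices to construct a unitary that aligns the eigenbasis of $\hat\rho$ with a product basis associated to the given factorization. First I would fix such a product basis $\{|j\rangle_A \otimes |k\rangle_B\}_{j=1,\ldots,d_1,\,k=1,\ldots,d_2}$ corresponding to the splitting ${\cal M} \cong {\cal M}_{d_1}\otimes {\cal M}_{d_2}$; this is an orthonormal basis of ${\cal H}$ with $D=d_1 d_2$ vectors. Then, since $\hat\rho$ is Hermitian and positive semi-definite, the spectral theorem supplies an orthonormal eigenbasis $\{|\psi_n\rangle\}_{n=1}^{D}$ with eigenvalues $p_n\geq 0$ summing to one.

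Next I would pick any bijection $n\leftrightarrow (j(n),k(n))$ and define $U$ to be the unique unitary sending $|\psi_n\rangle$ to $|j(n)\rangle_A\otimes|k(n)\rangle_B$ for every $n$. A direct computation gives
\[
U\hat\rho U^\dagger \;=\; \sum_{n} p_n\, |j(n)\rangle_A\langle j(n)| \otimes |k(n)\rangle_B\langle k(n)|,
\]
which is already of the form (\ref{ent}) with classical weights $w_n=p_n$, hence separable with respect to ${\cal M}$. Conjugating this decomposition by $U^\dagger$ on the left and $U$ on the right yields a presentation of $\hat\rho$ itself as a convex combination of tensor products of rank-one projectors, but now with factors living in $U^\dagger({\cal M}_{d_1}\otimes{\mathbb I})U$ and $U^\dagger({\mathbb I}\otimes{\cal M}_{d_2})U$ respectively, which is precisely the statement of separability with respect to the rotated factorization $U^\dagger{\cal M} U$.

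There is no serious obstacle here: the entire argument rests on the spectral theorem together with the triviality that any orthonormal basis of a $d_1 d_2$-dimensional space can be relabelled as a product basis. The only point that deserves a sentence of care is the equivalence of the two possible phrasings---rotating the state versus rotating the algebra---which is immediate because $A\mapsto U^\dagger A U$ is an algebra isomorphism respecting tensor factorization and preserving commutativity. The large non-uniqueness of $U$ (freedom in the chosen bijection, plus freedom within any degenerate eigenspace of $\hat\rho$) is in harmony with the guiding theme of the paper, namely that separability is a property of the chosen factorization rather than of the state alone.
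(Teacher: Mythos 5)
Your proposal is correct and follows essentially the same route as the paper's own proof: diagonalize $\hat\rho$ via the spectral theorem, choose a bijection between its eigenbasis and a product basis of the given factorization, and let the resulting unitary carry the diagonal (hence manifestly separable) form back onto the rotated algebra $U^\dagger{\cal M}U$. The only cosmetic difference is that you make the rank-one product-projector form and the state-versus-algebra rotation equivalence slightly more explicit than the paper does.
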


\begin{proof}
Consider a factorization ${\cal A} \otimes {\cal B} = {\cal H}$ with dimension $d_1 \times d_2 = D$,
and the corresponding partition ${\cal A}' \otimes {\cal B}'$ of the algebra.
We can find three bases $| \psi_i \rangle$ on ${\cal H}$, $| \psi^{(1)}_j \rangle$
on ${\cal A}$, $| \psi^{(2)}_k \rangle$ on ${\cal B}$
($i = 1, \cdots , D$; $j = 1, \cdots , d_1$; $k = 1, \cdots , d_2$). Any state can be expanded in the form
$$
\hat\rho = \sum^D_{i=1} \rho_i | \psi_i \rangle \langle \psi_i |,
$$
where the $\rho_i$ are scalars.

We know that the set $\{ | \psi^{(1)}_j \rangle \otimes | \psi^{(2)}_k \rangle \}_{j,k=1}^{d_1,d_2}$ is a basis for ${\cal H}$.
We can order the $D$ couples $(j,k)$ according to our liking and identify each $i$ with a couple through a
bijective map: $i \equiv i (j,k)$. There will thus exists an unitary operator $U$ such that:
$$
U | \psi_{i(j,k)} \rangle = | \psi^{(1)}_j \rangle \otimes | \psi^{(2)}_k \rangle, \, (i = 1, \cdots , D). 
$$
We can then write:
$$
\hat\zeta \equiv U \hat\rho \, U^\dagger = \sum_{i=1}^D \rho_{i(j,k)} | \psi^{(1)}_j \rangle \langle \psi^{(1)}_j |
\otimes | \psi^{(2)}_k \rangle \langle \psi^{(2)}_k |,
$$
and $\hat\zeta$ is clearly separable, according to definition (\ref{ent}). This means $\hat\rho$ is separable
with respect to the factorization algebra $U^\dagger \! ({\cal A}' \otimes {\cal B}') U$.~\footnote{This actually is a 
factorization algebra: see {\bf Theo. \ref{tailor}}.}
\end{proof}

A similar procedure cannot be applied to find a factorization where $\hat\rho$ is entangled because, unlike the set of
separable states, the set of entangled states is not convex, i.e., a linear hull of entangled states is not
ncessarily an entangled state itself. We could expand the density matrix over a set of (even maximally) entangled states,
but we could not go further \cite[Sect. II]{thirring}.

%%%%%%%%%%%%%%%%%%%%%%
\section{Entanglement}

A general entanglement criterion, which would allow us to check whether any composite system has some quantum correlation within,
has not been discovered yet. We are thus going to show two particular cases for which a sufficient condition has been found.
In order to do so we shall introduce some tools first.

%%%%%%%%%%%%%%%%%%%%%%%%%%%%%%%%
\subsection{Entanglement witness} \label{ew0}

We have given a precise definition of separability and entanglement in the beginning
of {\bf Chap. \ref{chap2}}, but let us show an ideal method to verify whether a certain state is entangled or not:

\begin{theorem}[Entanglement witness]
 If $\hat\rho_{ent}$ is entangled, then there exists a hermitian operator $E$ such that:
\begin{equation} \label{ew1}
\langle \hat\rho_{ent} | E \rangle = Tr(\hat\rho_{ent}E) < 0.
\end{equation}
Furthermore, for all separable $\hat\rho_{sep}$:
\begin{equation} \label{ew2}
\langle \hat\rho_{sep} | E \rangle = Tr(\hat\rho_{sep} E) \geq 0.
\end{equation}
$E$ is called an \emph{entanglement witness} (EW) \emph{\cite[Th. 5]{harshman}}.
\end{theorem}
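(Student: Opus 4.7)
The plan is to prove the existence of such an operator $E$ by a separating-hyperplane argument in the real Euclidean space of hermitian operators on $\mathcal{H}$, equipped with the Hilbert-Schmidt inner product $\langle X | Y \rangle = \mathrm{Tr}(X Y)$ introduced earlier. The geometric picture is that the set $\mathcal{S}$ of separable density matrices is a convex body in this real vector space, the entangled state $\hat\rho_{ent}$ lies outside it, and we want a hyperplane (a level set of a Hilbert-Schmidt inner product with some hermitian $E$) strictly between them.

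First I would verify that $\mathcal{S}$ is a convex, compact subset of the real vector space of hermitian operators. Convexity is immediate from definition (\ref{ent}): a convex combination of states of that form is again of that form, after renormalising the weights. Boundedness follows because every density matrix has Hilbert-Schmidt norm at most $1$. For closedness, I would invoke the fact that every separable state can, by Carath\'eodory's theorem applied to the convex hull of the pure product states, be written as a convex combination of at most $D^2$ pure product states. Hence $\mathcal{S}$ is the image, under the continuous map ``take the weighted sum of projectors'', of the compact set consisting of the $(D^2)$-simplex times $D^2$ copies of the (compact) manifold of pure product states $|\psi_A\rangle\langle\psi_A|\otimes|\psi_B\rangle\langle\psi_B|$. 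A continuous image of a compact set is compact, so $\mathcal{S}$ is compact and in particular closed.

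Once $\mathcal{S}$ is known to be a closed convex set not containing $\hat\rho_{ent}$, I would apply the finite-dimensional Hahn-Banach (separating hyperplane) theorem to produce a real-linear functional $\varphi$ on hermitian operators and a constant $c \in \mathbb{R}$ with $\varphi(\hat\rho_{ent}) < c$ and $\varphi(\sigma) \geq c$ for all $\sigma \in \mathcal{S}$. Because the Hilbert-Schmidt inner product is non-degenerate, the Riesz representation theorem on this finite-dimensional space gives a unique hermitian $F$ with $\varphi(X) = \mathrm{Tr}(F X)$. Setting $E \equiv F - c\,\hat{\mathbb{I}}$ and using $\mathrm{Tr}\sigma = 1$ for any density matrix then yields $\mathrm{Tr}(\hat\rho_{sep} E) = \varphi(\hat\rho_{sep}) - c \geq 0$ and $\mathrm{Tr}(\hat\rho_{ent} E) = \varphi(\hat\rho_{ent}) - c < 0$, which are precisely (\ref{ew1}) and (\ref{ew2}).

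The main obstacle is the closedness of $\mathcal{S}$: once that is settled, everything else is standard finite-dimensional convex analysis. A subtlety worth noting is that strict separation (rather than merely weak separation) is needed for the strict inequality in (\ref{ew1}), and this is exactly what closedness of $\mathcal{S}$ combined with the isolated point $\{\hat\rho_{ent}\}$ buys us via the standard separation theorem for disjoint compact and closed convex sets.
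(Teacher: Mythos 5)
Your argument is correct, and it is a genuinely different route from the one the paper takes. The paper does not prove the witness theorem abstractly; instead (in the appendix on the Bertlmann--Narnhofer--Thirring construction) it exhibits an explicit witness $E = \bigl(\hat\rho_0 - \hat\rho_{ent} - \langle \hat\rho_0 | \hat\rho_0 - \hat\rho_{ent}\rangle\,{\mathbb I}\bigr)/\lVert \hat\rho_0 - \hat\rho_{ent}\rVert$, where $\hat\rho_0$ is the separable state nearest to $\hat\rho_{ent}$ in Hilbert--Schmidt distance, and then proves $\langle\hat\rho_{sep}|E\rangle \geq 0$ by a geometric contradiction argument involving the ball $B(\hat\rho_{ent},\delta(\hat\rho_{ent},\hat\rho_0))$ and its tangent plane at $\hat\rho_0$. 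Your separating-hyperplane proof establishes existence more cleanly and with less machinery: convexity is immediate, compactness follows from Carath\'eodory plus compactness of the set of pure product states, and Hahn--Banach with Riesz representation does the rest; the shift $E = F - c\,\hat{\mathbb I}$ to normalize the threshold to zero is exactly right. What you give up is constructiveness and optimality: the paper's witness is the \emph{optimal} one (it touches the separable set, satisfying equation (\ref{ew3})), and identifies the separating hyperplane geometrically as the tangent plane at the nearest separable state, which is what the later applications (the splitted-states bound, the geometry of the two-qubit tetrahedron) actually use. Note also that the two proofs are not independent in their hypotheses: the paper's construction tacitly assumes that the minimizer $\hat\rho_0$ exists, and that is precisely the compactness of the separable set that you prove explicitly; in that sense your argument fills in a step the paper leaves implicit.
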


Let us also define:

\begin{definition}[Optimal EW]
 An entanglement witness is \emph{optimal} if there exists a separable $\hat\rho_{sep}$ such that
\begin{equation} \label{ew3}
\langle \hat\rho_{sep} | E_{opt} \rangle = 0.
\end{equation}
\end{definition}

Given an entangled state, it is always possible to find an EW, but the search may certainly result in an enormous
computational effort, especially for large dimensions or multiple factorizations.
Operationally, we can build an  optimal EW as follows \cite[(39)]{thirring}:
\begin{equation} \label{opt}
E_{opt} \equiv \frac{\hat\rho_0 - \hat\rho_{ent} - \langle \hat\rho_0 | \hat\rho_0 - \hat\rho_{ent} \rangle {\mathbb I}}
                    {\lVert \hat\rho_0 - \hat\rho_{ent} \rVert},
\end{equation}
where $\hat\rho_0$ is the nearest to $\hat\rho_{ent}$ among all the separable states, but again the determination
of $\hat\rho_0$ may not be really easy.~\footnote{We prove that $E_{opt}$ is an optimal EW in {\bf Sect. \ref{ew}}.}

%%%%%%%%%%%%%%%%%%%%%%%%%%%%%%%%%%%%%%
\subsection{Peres-Horodecki criterion} \label{pauli}

We shall now introduce the Peres-Horodecki criterion, which poses a condition that is
necessary for separability, and consequently sufficient for entanglement. 
Furthermore, in low dimension this condition is equivalent to separability.
Let us begin with an illustrative example.

Consider a qubit. It is easily demonstrated that every density matrix can
be written in the form \cite[Sect. 5.2.3]{diosi}:
$$
\hat\rho = \frac{{\mathbb I} + \vec{s} \cdot \vec{\sigma}}{2}, \quad |\vec{s}| \leq 1,
$$
where $\vec{s}$ is a three-dimensional vector and $\vec{\sigma}$ is the \emph{spin vector},
whose components are the three \emph{Pauli matrices}:
\[
\sigma_x =
\left(
\begin{array}{cc}
 0 & 1 \\
 1 & 0 \\
\end{array} \right),
\sigma_y =
\left(
\begin{array}{cc}
 0 & -i \\
 i & 0 \\
\end{array} \right),
\sigma_z =
\left(
\begin{array}{cc}
 1 & 0 \\
 0 & -1 \\
\end{array} \right),
\]
$|\vec{s}| = 1$ if and only if the state is pure, i.e., it exists a $| \psi \rangle$ such that:
$$
| \psi \rangle \langle \psi | = \frac{{\mathbb I} + \vec{s} \cdot \vec{\sigma}}{2}.
$$
A hypothetical spin-inversion operator $T$ would act on a state this way:
$$
T \frac{{\mathbb I} + \vec{s} \cdot \vec{\sigma}}{2} = \frac{{\mathbb I} - \vec{s} \cdot \vec{\sigma}}{2}.
$$

Consider now a maximally entangled state for two qubits
$$
| \psi \rangle = \frac{{\mathbb I} \otimes {\mathbb I} - \vec{\sigma} \otimes \vec{\sigma}}{4};
$$
a partial, or local, spin-inversion $T \otimes {\mathbb I}$ would transform $| \psi \rangle$ into:
$$
(T \otimes {\mathbb I}) \frac{{\mathbb I} \otimes {\mathbb I} - \vec{\sigma} \otimes \vec{\sigma}}{4} =
\frac{{\mathbb I} \otimes {\mathbb I} + \vec{\sigma} \otimes \vec{\sigma}}{4},
$$
but this last matrix is no longer a density matrix, being indefinite, as we can see for instance from:
$$
\mbox{Tr} \left( \frac{{\mathbb I} \otimes {\mathbb I} - \vec{\sigma} \otimes \vec{\sigma}}{4} \;	
\frac{{\mathbb I} \otimes {\mathbb I} + \vec{\sigma} \otimes \vec{\sigma}}{4} \right) = - \frac{1}{2}.
$$

It could be shown that the partial spin-inversion transforms every entangled matrix into an indefinite matrix.
Physically, this phenomenon occurs when a system is in an entangled state, since an inversion
of the state of a single subsystem is not actually realizable without affecting the
rest of the system.

This property is formalized for the most general case in the following very powerful criterion,
necessary for separability:

\begin{theorem}[Peres-Horodecki criterion] \label{phcrit}
Consider the state $\hat\rho$ for a bipartite system ${\cal A} \otimes {\cal B}$,
where $\hat\rho^{(A)}$ denotes the partial transpose of $\hat\rho$ with respect to the subsystem ${\cal A}$.
If $\hat\rho$ is separable, then $\hat\rho^{(A)} \geq 0$ \emph{\cite[Th. 1]{lewenstein}}.
\end{theorem}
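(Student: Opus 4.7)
The plan is to derive the inequality $\hat\rho^{(A)} \geq 0$ directly from the convex decomposition granted by separability, relying only on linearity of the partial transpose and on the fact that transposition preserves the spectrum of a matrix.

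First, I would invoke the definition of separability to write
\begin{equation*}
\hat\rho = \sum_n w_n \hat\rho_n^{(A)} \otimes \hat\rho_n^{(B)}, \qquad w_n \geq 0, \quad \sum_n w_n = 1,
\end{equation*}
with each $\hat\rho_n^{(A)}$ and $\hat\rho_n^{(B)}$ a bona fide density operator on the respective factor. Since the partial transposition $T \otimes \mathbb{I}$ is a linear map on the algebra of matrices, it commutes with the convex sum, giving
\begin{equation*}
\hat\rho^{(A)} = \sum_n w_n \bigl(\hat\rho_n^{(A)}\bigr)^T \otimes \hat\rho_n^{(B)}.
\end{equation*}

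The central step is to observe that $(\hat\rho_n^{(A)})^T$ is itself a valid density operator. Transposition clearly preserves the trace, and it sends Hermitian matrices to Hermitian matrices (for Hermitian $H$ one has $H^T = \overline{H}$, still Hermitian). Decisively, $A$ and $A^T$ share the same characteristic polynomial, so the eigenvalues of $(\hat\rho_n^{(A)})^T$ coincide with those of $\hat\rho_n^{(A)}$ and are therefore non-negative. Hence $(\hat\rho_n^{(A)})^T \geq 0$, and its tensor product with the positive semi-definite $\hat\rho_n^{(B)}$ remains positive semi-definite. The full matrix $\hat\rho^{(A)}$ is thus a non-negative convex combination of positive semi-definite operators, so $\hat\rho^{(A)} \geq 0$.

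The proof itself presents no serious obstacle; it is essentially a matter of pushing a linear map through a convex sum and exploiting the invariance of the spectrum under transposition. The genuinely interesting point — and the reason this is only a necessary condition, not a sufficient one in general — is the asymmetry between global and partial transposition: the full transposition trivially preserves positivity for any density matrix, but $T \otimes \mathbb{I}$ corresponds to a physically non-implementable spin-flip on a single party, and for an entangled state this operation can open up negative eigenvalues, as illustrated explicitly in the preceding two-qubit example with $(\mathbb{I} \otimes \mathbb{I} + \vec\sigma \otimes \vec\sigma)/4$. The proof above makes transparent why separable states cannot exhibit this pathology: each separable summand is, factor by factor, independently positive, so the non-physical nature of $T \otimes \mathbb{I}$ on correlated states never gets a chance to manifest itself.
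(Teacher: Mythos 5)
Your proof is correct, and it is the standard argument (essentially Peres' original one): push the linear map $T\otimes\mathbb{I}$ through the convex decomposition $\sum_n w_n \hat\rho_n^{(A)}\otimes\hat\rho_n^{(B)}$, note that transposition preserves Hermiticity, trace and spectrum so that each $(\hat\rho_n^{(A)})^T$ is again a density matrix, and conclude that $\hat\rho^{(A)}$ is a convex combination of positive semi-definite product operators. Every step checks out. The paper, however, does not actually prove this theorem: it only motivates it heuristically with the two-qubit spin-flip example and the unproved remark that partial spin-inversion turns every entangled matrix indefinite, and then states the criterion with a citation to Lewenstein et al. So your proposal supplies a complete proof where the paper defers entirely to the literature; what it buys is a transparent explanation of exactly where separability enters (positivity is tested factor by factor, so the non-complete-positivity of $T\otimes\mathbb{I}$ never bites), at the cost of none of the geometric intuition the paper's example is trying to convey. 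Two small caveats worth keeping in mind: the partial transpose depends on the choice of local basis, but since different choices give partial transposes related by a local unitary conjugation, positivity is unaffected and your fixed-basis argument suffices; and your notation $\hat\rho_n^{(A)}$ for the separable factors collides with the paper's use of $\hat\rho^{(A)}$ for the partial transpose in this theorem (and for the reduced density matrix elsewhere), so a reader of the paper should be warned which object is meant.
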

A matrix that fulfills the above condition is called ``PPT'', for \emph{positive partial transpose}.

Since $\left( \hat\rho^{(A)} \right)^{(B)} = \hat\rho^T \geq 0$, then $\hat\rho^{(A)} \geq 0$ implies
$\hat\rho^{(B)} \geq 0$ and vice versa, as we expected, since the two subsystems are totally equivalent
for what concerns the criterion.

In $2 \times 2$ or $2 \times 3$ dimensions the condition is also sufficient:

\begin{theorem}
 In spaces of dimensions $2 \times 2$ or $2 \times 3$, $\hat\rho$ is separable if and
only if $\hat\rho^{(A)} \geq 0$.
 \emph{\cite[Th. 2]{lewenstein}}.
\end{theorem}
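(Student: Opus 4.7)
The forward direction (separable $\Rightarrow$ PPT) is already \textbf{Theorem \ref{phcrit}}, so only the converse remains: in dimensions $2\times 2$ and $2\times 3$, $\hat\rho^{(A)}\ge 0$ implies that $\hat\rho$ is separable. The plan is to combine a duality between separability and positive maps with a low-dimensional classification of the latter, both of which I would import from the literature.

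First I would invoke the \emph{Horodecki duality}: a state $\hat\rho$ on ${\cal A}\otimes{\cal B}$ is separable if and only if $(\mathrm{id}\otimes\Lambda)(\hat\rho)\ge 0$ for every positive (not necessarily completely positive) linear map $\Lambda$ between the subsystem matrix algebras. This is really a restatement of the entanglement witness characterization of \textbf{Sect.~\ref{ew0}} transported through the Choi-Jamio\l{}kowski isomorphism, which pairs Hermitian operators on ${\cal A}\otimes{\cal B}$ with linear maps from one factor to the other; an entanglement witness then corresponds exactly to a positive but not completely positive map, and the negativity condition (\ref{ew1}) translates into ``$(\mathrm{id}\otimes\Lambda)(\hat\rho)$ is indefinite''. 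I would spell this correspondence out explicitly to keep the argument self-contained relative to the rest of the chapter.

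Second, I would appeal to the \emph{St\o{}rmer-Woronowicz decomposability theorem}: every positive linear map from $M_2$ to $M_2$, or between $M_2$ and $M_3$ in either direction, can be written $\Lambda=\Lambda_1+\Lambda_2\circ T$ with $\Lambda_1,\Lambda_2$ completely positive and $T$ the matrix transposition. This is the genuinely deep input and the main obstacle of the proof; its failure in $M_3\to M_3$ and higher, witnessed for instance by the Choi map, is precisely what prevents the theorem from extending beyond the $2\times 2$ and $2\times 3$ cases and is what makes the converse a nontrivial fact rather than a linear-algebraic triviality.

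Granting these two ingredients, the conclusion follows by a direct computation. Assume $\hat\rho^{(A)}\ge 0$ and let $\Lambda$ be any positive map on the appropriate factor. Decomposing $\Lambda=\Lambda_1+\Lambda_2\circ T$ one obtains
$$(\mathrm{id}\otimes\Lambda)(\hat\rho)=(\mathrm{id}\otimes\Lambda_1)(\hat\rho)+(\mathrm{id}\otimes\Lambda_2)\bigl((\mathrm{id}\otimes T)(\hat\rho)\bigr).$$
The first summand is $\ge 0$ because $\Lambda_1$ is completely positive and $\hat\rho\ge 0$. For the second, note that $(\mathrm{id}\otimes T)(\hat\rho)$ is the partial transpose with respect to the other subsystem and has the same spectrum as $\hat\rho^{(A)}$, hence is $\ge 0$ by hypothesis; complete positivity of $\Lambda_2$ then preserves this. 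Therefore $(\mathrm{id}\otimes\Lambda)(\hat\rho)\ge 0$ for every positive $\Lambda$, and the Horodecki duality yields separability. Everything apart from the Woronowicz decomposition is then essentially bookkeeping.
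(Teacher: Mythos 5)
Your argument is correct in outline, but note that the paper does not actually prove this theorem: it is stated with a bare citation to \cite[Th.~2]{lewenstein}, and the proof appendix covers only Theorems \ref{TPS} and \ref{tailor} and the optimal entanglement witness. What you have written is essentially the canonical Horodecki proof that underlies the cited reference, and it is sound: the forward direction is indeed already Theorem \ref{phcrit}; the duality ``$\hat\rho$ separable $\Leftrightarrow$ $(\mathrm{id}\otimes\Lambda)(\hat\rho)\ge 0$ for all positive $\Lambda$'' is the correct reformulation of the witness characterization of Sect.~\ref{ew0} via the Choi--Jamio\l{}kowski isomorphism; the St\o{}rmer--Woronowicz decomposability of positive maps in $M_2\to M_2$ and between $M_2$ and $M_3$ is exactly the deep input needed; and your final computation is right, including the observation that $(\mathrm{id}\otimes T)(\hat\rho)=\hat\rho^{(B)}$ has the same spectrum as $\hat\rho^{(A)}$ so the hypothesis applies to either partial transpose. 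You also correctly locate the obstruction to higher dimensions in the existence of non-decomposable positive maps such as the Choi map. The only caveat is that your proof delegates all the real work to two imported theorems stated without proof --- but since the paper delegates the entire statement to the literature, your proposal is strictly more informative than what the paper provides, and the two ``approaches'' are not in conflict: yours is simply the proof of the result the paper chose to cite.
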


%%%%%%%%%%%%%%%%%%%%%%%%%%%
\subsection{Splitted states}

We shall now introduce the first class of states which can be represented as entangled.
We will provide and use an optimal EW in the proof.~\footnote{For the definition of
optimal EW, see {\bf Subs. \ref{ew0}}.}

\begin{theorem}[Bound for splitted states] \label{splitteo}
Consider a state $\hat\rho$ over a $d \times d$-dimensional H-space.
Let $\hat\pi$ be a projector to a maximally entangled pure state, $\hat\sigma$ be another state, orthogonal
to $\hat\pi$, i.e. $\langle \hat\pi|\hat\sigma \rangle = 0$ and $\beta$ be a real number, with $0 \leq \beta \leq 1$.
If $\hat\rho$ can be written in the splitted form
\begin{equation} \label{split}
\hat\rho = \beta \hat\pi + (1 - \beta) \hat\sigma,
\end{equation}
then $\hat\rho$ is entangled if $\beta > \frac{1}{d}$. 
\end{theorem}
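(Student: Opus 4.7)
The plan is to exhibit an explicit entanglement witness tailored to the maximally entangled projector $\hat\pi$ and evaluate it on $\hat\rho$. The natural candidate, already suggested by the Schmidt-form of a maximally entangled vector, is
\[
E \equiv \frac{1}{d}\mathbb{I} - \hat\pi.
\]
The goal is to show $\mathrm{Tr}(E\hat\rho)<0$ whenever $\beta>1/d$, which by the entanglement witness theorem forces $\hat\rho$ to be entangled.

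First I would verify that $E$ really is an entanglement witness, i.e.\ $\mathrm{Tr}(E\hat\rho_{sep})\ge 0$ for every separable state. By linearity it suffices to check this on a pure product state $|\psi\rangle\otimes|\chi\rangle$, so the task reduces to the bound
\[
|\langle\phi|\psi\otimes\chi\rangle|^{2}\le \frac{1}{d},
\]
where $|\phi\rangle=\frac{1}{\sqrt d}\sum_{i}|i\rangle|i\rangle$ is (up to local unitaries) the maximally entangled state whose projector is $\hat\pi$. Writing out the inner product gives $\frac{1}{\sqrt d}\sum_{i}\langle i|\psi\rangle\langle i|\chi\rangle=\frac{1}{\sqrt d}\langle\psi^{*}|\chi\rangle$, and Cauchy–Schwarz together with $\|\psi\|=\|\chi\|=1$ yields the claimed bound. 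Equivalently, this says $\mathrm{Tr}(\hat\pi\hat\rho_{sep})\le 1/d$ for every separable state, so $\mathrm{Tr}(E\hat\rho_{sep})\ge 0$.

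Next I would compute $\mathrm{Tr}(E\hat\rho)$ directly. Using $\mathrm{Tr}(\hat\rho)=1$ one has $\mathrm{Tr}(\mathbb{I}\hat\rho)/d = 1/d$. For the second piece, $\mathrm{Tr}(\hat\pi\hat\rho)=\beta\,\mathrm{Tr}(\hat\pi^{2})+(1-\beta)\mathrm{Tr}(\hat\pi\hat\sigma)$; the projector identity $\hat\pi^{2}=\hat\pi$ together with $\mathrm{Tr}(\hat\pi)=1$ gives the first term equal to $\beta$, while the Hilbert–Schmidt orthogonality assumption $\langle\hat\pi|\hat\sigma\rangle=\mathrm{Tr}(\hat\pi\hat\sigma)=0$ kills the second. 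Hence
\[
\mathrm{Tr}(E\hat\rho)=\frac{1}{d}-\beta.
\]
If $\beta>1/d$ this is strictly negative, and so by the entanglement witness theorem $\hat\rho$ cannot be separable.

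The only non-routine step is the overlap bound in the first paragraph, which is the content of the well-known fact that maximally entangled states lie at distance $1/d$ from the set of separable states in the fidelity sense; everything else is a one-line computation that exploits the orthogonality and the projector property. As a side remark, the witness $E$ constructed here coincides, up to normalization and an additive multiple of the identity, with the optimal-EW formula \eqref{opt} applied to $\hat\rho_{ent}=\hat\pi$ with nearest separable state $\hat\rho_{0}=\mathbb{I}/d$, which makes the choice of $E$ well-motivated rather than ad hoc.
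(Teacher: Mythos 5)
Your proof is correct and follows essentially the same route as the paper: the witness $E=\frac{1}{d}\mathbb{I}-\hat\pi$ is just the paper's $A=\mathbb{I}_{d^2}-d\hat\pi$ rescaled by $1/d$, the positivity on product states is established by the same Cauchy--Schwarz bound on $\langle\psi^*|\chi\rangle$, and the evaluation $\mathrm{Tr}(E\hat\rho)=\frac{1}{d}-\beta$ via the projector property and Hilbert--Schmidt orthogonality matches the paper's computation of $1-d\beta$. No gaps.
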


\begin{proof}
 Consider the operator $A = {\mathbb I}_{d^2} - d \hat\pi$. We will show that $A$ is an optimal entanglement witness.
First, we show that for any pure $\hat\rho_{sep} \equiv |\phi \otimes \psi \rangle \langle \phi \otimes \psi |$
it holds $\langle \hat\rho_{sep}| A \rangle = \mbox{Tr} \hat\rho_{sep} A \geq 0$, see definition (\ref{ew2}). The extension to mixed states
is achieved by linearity. Since $\hat\pi$ is a projector to a maximally entangled state, we can define 
$\hat\pi \equiv | \chi \rangle \langle \chi |$ with $| \chi \rangle  \equiv
\frac{1}{\sqrt{d}} \sum_i | \chi_i^{(A)} \rangle \otimes  |\chi_i^{(B)} \rangle$. Both
$\{|\chi_i^{(A)} \rangle\}_{i=1}^d$ and $\{|\chi_i^{(B)} \rangle\}_{i=1}^d$ are a basis for their
respective subspace; so:
\[
 \begin{split}
  \langle \phi \otimes \psi | A | \phi \otimes \psi \rangle &= 1 - d \langle \phi \otimes \psi | \hat\pi | \phi \otimes \psi \rangle \\
&= 1- d \sum_{i,j=1}^d \frac{1}{d} \langle \phi | \chi_i^{(A)} \rangle \langle \psi | \chi_i^{(B)} \rangle
\langle \chi_i^{(A)} | \phi \rangle \langle \chi_i^{(B)} | \psi \rangle \\
&\equiv 1 - d \sum_{i,j=1}^d \frac{1}{d} \phi_i^* \psi_i^* \phi_j \psi_j \\
&= 1 - | \langle \phi^* | \psi \rangle |^2 \geq 0.
 \end{split}
\]

For $| \phi^* \rangle = | \psi \rangle $, we have $\langle \phi \otimes \psi | A | \phi \otimes \psi \rangle = 0$,
so the witness is optimal by (\ref{ew3}).

The scalar product between $\hat\rho$ and $A$ is:
\[
 \begin{split}
  \langle \hat\rho | A \rangle 
  &= \langle \beta \hat\pi + (1 - \beta) \hat\sigma | {\mathbb I}_{d^2} - d \hat\pi \rangle  \\
  &= \beta \langle \hat\pi | {\mathbb I}_{d^2} \rangle + (1 - \beta) \langle \hat\sigma | {\mathbb I}_{d^2} \rangle
  - d \beta \langle \hat\pi | \hat\pi \rangle - d (1 - \beta) \langle \hat\sigma | \hat\pi \rangle \\
&= \beta + (1 - \beta) - d \beta =  1 - d \beta,
 \end{split}
\]
where we have used the property of the density matrix $\langle \hat\rho | {\mathbb I} \rangle
= \mbox{Tr} (\hat\rho {\mathbb I}) = 1$ (see {\bf Subs. \ref{density}}), and the identity $\hat\pi = \hat\pi^2$,
due to the fact that $\hat\pi$ is a projector.

Then we have:
$$
\langle \hat\rho | A \rangle < 0 \quad \Longleftrightarrow \quad \beta > \frac{1}{d},
$$
which was to be demonstrated.
\end{proof}

Note that every state which can be decomposed in the form (\ref{split}) has maximum eigenvalue $\rho_{max} = \max \{\beta, (1-\beta)\}$,
so, to satisfy the hypothesis, it must hold in particular $\rho_{max} > \frac{1}{d}$.

%%%%%%%%%%%%%%%%%%%%%%%%%%%%%%%%%%%%%%%%%
\subsection{Entanglement over a subspace}

We will now show how, under certain conditions, it is possible to create entanglement
over the subspace generated by the eigenvectors of a density matrix. 

\begin{theorem}[Entanglement over a subspace]
 
Consider a state $\hat\rho$ in a space factorized with dimension $d \times d$.
Let its eigenvalues be $\{\rho_i\}_{i=1}^{d^2}$, ordered in the following way:
$$
1 \geq \rho_1 \geq \rho_2 \geq \cdots \geq \rho_{d^2-1} \geq \rho_{d^2} \geq 0,
$$
where the bounds are due to the properties of the density operators. We will
also call $| i \rangle$ the eigenvector of $\rho_i$.  Consider the subspace
$$
{\cal V} \equiv \mbox{\emph{span}}(|1\rangle, |d^2-2\rangle, |d^2-1\rangle, |d^2\rangle).
$$
If $\rho_1 > \frac{3}{d^2}$, then there is always a choice of factorization possible
such that $\hat\rho$ appears entangled in ${\cal V}$.

\end{theorem}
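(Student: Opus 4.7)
The plan is to reduce the claim to the bipartite $2\times 2$ case of \textbf{Theorem \ref{splitteo}} applied inside the four-dimensional subspace $\mathcal{V}$. Since $\hat\rho$ is diagonal in the basis $\{|i\rangle\}$, its compression to $\mathcal{V}$ is
\[
\hat\rho_{\mathcal{V}} = \rho_1|1\rangle\langle 1| + \rho_{d^2-2}|d^2-2\rangle\langle d^2-2| + \rho_{d^2-1}|d^2-1\rangle\langle d^2-1| + \rho_{d^2}|d^2\rangle\langle d^2|.
\]
Normalising by $N \equiv \rho_1+\rho_{d^2-2}+\rho_{d^2-1}+\rho_{d^2}$, the state $\tilde\rho \equiv \hat\rho_{\mathcal{V}}/N$ on $\mathcal{V}$ automatically splits as $\tilde\rho = \beta\hat\pi + (1-\beta)\hat\sigma$, with $\beta = \rho_1/N$, $\hat\pi = |1\rangle\langle 1|$, and $\hat\sigma$ the normalised mixture of $|d^2-2\rangle\langle d^2-2|$, $|d^2-1\rangle\langle d^2-1|$ and $|d^2\rangle\langle d^2|$. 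Since the supports are disjoint, the orthogonality $\langle\hat\pi|\hat\sigma\rangle = 0$ required by \textbf{Theorem \ref{splitteo}} is automatic.

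The arithmetical heart of the argument is the inequality
\[
\rho_{d^2-2}+\rho_{d^2-1}+\rho_{d^2}\leq \frac{3}{d^2},
\]
which follows from the ordering hypothesis alone. Writing $S$ for the left-hand side, one has $\rho_{d^2-2}\geq S/3$ (it is the largest of three non-negative terms summing to $S$), and since $\rho_i \geq \rho_{d^2-2}$ for every $i\leq d^2-3$,
\[
1 = \sum_{i=1}^{d^2}\rho_i \;\geq\; S + (d^2-3)\frac{S}{3} \;=\; \frac{d^2\, S}{3}.
\]
Combined with the hypothesis $\rho_1 > 3/d^2$, this yields $\rho_1 > N - \rho_1$, hence $\beta > 1/2$.

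To close the proof one needs to exhibit a $2\times 2$ factorisation of $\mathcal{V}$ in which $\hat\pi$ is the projector onto a \emph{maximally} entangled pure state. Since $|1\rangle$ is a pure vector in $\mathcal{V}\cong\mathbb{C}^4$, the \textbf{Tailored observables theorem} (\textbf{Theorem \ref{tailor}}) provides subalgebras $\mathcal{A},\mathcal{B}$ that identify $\mathcal{V}$ with $\mathbb{C}^2\otimes\mathbb{C}^2$ and realise $|1\rangle$ with Schmidt coefficients $\lambda_1=\lambda_2=1/\sqrt{2}$. With this factorisation, \textbf{Theorem \ref{splitteo}} in dimension $d=2$ applies to $\tilde\rho$, and the bound $\beta > 1/2 = 1/d$ forces $\tilde\rho$ to be entangled in $\mathcal{V}$. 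The only non-routine ingredient is the eigenvalue inequality above; every other step is an immediate application of results already proved in this chapter.
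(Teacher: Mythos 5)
Your proof is correct, but it reaches the conclusion by a genuinely different route from the paper. The paper explicitly constructs a unitary $K$ on ${\cal V}$ sending $|1\rangle$ and $|d^2-1\rangle$ to the two Bell states $|\Phi^\pm\rangle$ and the remaining two eigenvectors to product states, then computes the partial transpose of the transformed $4\times 4$ block and shows it acquires a negative eigenvalue, i.e.\ it argues via the Peres--Horodecki criterion (Theorem \ref{phcrit}). You instead only need $|1\rangle$ to become maximally entangled (supplied by Theorem \ref{tailor}), after which the diagonal compression of $\hat\rho$ to ${\cal V}$ automatically takes the splitted form (\ref{split}) with $\hat\pi=|1\rangle\langle 1|$, and Theorem \ref{splitteo} with $d=2$ finishes the job once $\beta>1/2$. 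Notably, both arguments funnel through the identical intermediate sufficient condition $\rho_1>\rho_{d^2-2}+\rho_{d^2-1}+\rho_{d^2}$, and your counting step $\rho_{d^2-2}+\rho_{d^2-1}+\rho_{d^2}\leq 3/d^2$ is a slightly more direct packaging of the paper's bound $\rho_{d^2-2}\leq(1-\rho_1)/(d^2-3)$. What each approach buys: the paper's PPT computation is strictly more sensitive before relaxation --- it detects entanglement already when $\tfrac{1}{2}(\rho_1-\rho_{d^2-1})>\sqrt{\rho_{d^2-2}\rho_{d^2}}$, which by the AM--GM step is weaker than your $\beta>1/2$ --- and it exhibits the second Bell state explicitly; your route is shorter, avoids the explicit eigenvalue computation of the partially transposed block, and makes transparent the logical dependence on the splitted-states theorem, a connection the paper itself only points out in the remark following its proof. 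The only cosmetic caveat is the degenerate case $N=\rho_1$, where $\hat\sigma$ is not determined by the compression; there $\beta=1$ and any state orthogonal to $\hat\pi$ serves, so nothing breaks.
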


\begin{proof}
 Over the basis $\{ | i \rangle \}$, the matrix $\hat\rho$ is diagonal:
\[
 \left(
\begin{array}{ccccc}
\ddots & \vdots & \vdots & \vdots & \vdots \\
\cdots & \rho_1 & 0 & 0 & 0 \\
\cdots & 0 & \rho_{d^2-2} & 0 & 0 \\
\cdots & 0 & 0 & \rho_{d^2} & 0 \\
\cdots & 0 & 0 & 0 & \rho_{d^2-1} \\
\end{array}
\right);
\]
we have reordered the basis vectors to simplify our next steps.

The four-dimensional space ${\cal V}$ can be factorized as a $2 \times 2$ dimensional tensor product space
${\cal W}_a \otimes {\cal W}_b = {\cal W}$, and in particular we choose the vectors $|1\rangle$ and $|d^2-1\rangle$
to appear maximally entangled in the new factorization.~\footnote{This is always possible, as we have proved
in {\bf Chap. \ref{chap2}}.} We define $K: {\cal V} \longrightarrow {\cal W}$,
 the unitary operator connecting the two factorizations, in the following way:
\begin{align*}
 K |1\rangle &\equiv \frac{|\uparrow_a \uparrow_b\rangle + |\downarrow_a  \downarrow_b \rangle}{\sqrt{2}}; \\
 K |d^2-2\rangle &\equiv |\uparrow_a \downarrow_b \rangle; \\
 K |d^2\rangle &\equiv |\downarrow_a \uparrow_b \rangle; \\
 K |d^2-1\rangle &\equiv \frac{|\uparrow_a \uparrow_b\rangle - |\downarrow_a \downarrow_b \rangle}{\sqrt{2}};
\end{align*}
$\{ |\uparrow_a\rangle, |\downarrow_a\rangle \}$ and $\{ |\uparrow_b\rangle, |\downarrow_b\rangle \}$ are bases
of the two subspaces.

Represented with respect to the bases $\{ |1\rangle, |d^2-2\rangle, |d^2\rangle, |d^2-1\rangle \}$ of ${\cal V}$
and $\{ |\uparrow_a \uparrow_b \rangle, |\uparrow_a \downarrow_b \rangle,
|\downarrow_a \uparrow_b \rangle, |\downarrow_a \downarrow_b \rangle \}$ of ${\cal W}$, $K$ assumes the
form:
\[
 \left(
\begin{array}{cccc}
 \frac{1}{\sqrt{2}} & 0 & 0 & \frac{1}{\sqrt{2}} \\
 0 & 1 & 0 & 0 \\
 0 & 0 & 1 & 0 \\
 \frac{1}{\sqrt{2}} & 0 & 0 & -\frac{1}{\sqrt{2}} \\
\end{array}
\right),
\]
so it is easy to see that it can be written:~\footnote{We consider the restriction of $\hat\rho$ to $V$,
the rest of the matrix is unaffected by the transformation.}
\[
 \hat\rho_K \equiv K \hat\rho K^\dagger =
\left(
\begin{array}{cccc}
 \frac{1}{2}(\rho_1+\rho_{d^2-1}) & 0 & 0 & \frac{1}{2}(\rho_1-\rho_{d^2-1}) \\
 0 & \rho_{d^2-2} & 0 & 0 \\
 0 & 0 & \rho_{d^2} & 0 \\
 \frac{1}{2}(\rho_1-\rho_{d^2-1}) & 0 & 0 & \frac{1}{2}(\rho_1+\rho_{d^2-1}) \\
\end{array}
\right).
\]

Our scope, now, is to show that $\hat\rho_K$ violates the Peres-Horodecki criterion ({\bf Theo. \ref{phcrit}}), i.e., it
is entangled. Consider the partial transpose of $\hat\rho_K$ with respect to the subspace ${\cal W}_b$:
\[
 \hat\rho_K^{(b)} =
\left(
\begin{array}{cccc}
 \frac{1}{2}(\rho_1+\rho_{d^2-1}) & 0 & 0 & 0 \\
 0 & \rho_{d^2-2} & \frac{1}{2}(\rho_1-\rho_{d^2-1}) & 0 \\
 0 & \frac{1}{2}(\rho_1-\rho_{d^2-1}) & \rho_{d^2} & 0 \\
 0 & 0 & 0 & \frac{1}{2}(\rho_1+\rho_{d^2-1}) \\
\end{array}
\right);
\]
we shall now search sufficient conditions for the coefficients $\rho_i$ for one of the eigenvalues of
$\hat\rho_K^{(b)}$ to be negative. 
Those eigenvalues are:
\begin{align*}
 e_1 &\equiv \frac{1}{2}(\rho_1+\rho_{d^2-1}) \mbox{, doubly degenerate}; \\
 e_2^\pm &\equiv \frac{1}{2}(\rho_{d^2-2}+\rho_{d^2}) \pm \sqrt{\frac{1}{4}(\rho_{d^2-2}+\rho_{d^2})^2
 - \rho_{d^2-2} \cdot \rho_{d^2} + \frac{1}{4}(\rho_1 - \rho_{d^2-1})^2}. \\
\end{align*}
The one and only eigenvalue which is not always non-negative is $e_2^-$, and it is negative when:
\begin{gather*}
 \frac{1}{4}(\rho_1 - \rho_{d^2-1})^2 - \rho_{d^2-2} \cdot \rho_{d^2} > 0 \\
\Longleftrightarrow \quad \frac{1}{2}(\rho_1 - \rho_{d^2-1}) > \sqrt{\rho_{d^2-2} \cdot \rho_{d^2}};
\end{gather*}
note that both terms of the first inequality are positive by assumption. Furthermore, the second term
is a geometric mean, so it is always less than or equal to the arithmetic mean of the two coefficients.
We can then relax our condition as follows:
$$
\frac{\rho_1 - \rho_{d^2-1}}{2} > \frac{\rho_{d^2-2} + \rho_{d^2}}{2} \quad
\Longleftrightarrow \quad \rho_1 > \rho_{d^2-2} + \rho_{d^2-1} + \rho_{d^2}.
$$
Since, by assumption again, it holds $\rho_{d^2-2} \geq \rho_{d^2-1} \geq \rho_{d^2}$, then the sufficient condition
for $\hat\rho_K$ to be entangled can be further relaxed to $\rho_1 > 3 \rho_{d^2-2}$.

Consider the trace of $\hat\rho$:
\[
 \mbox{Tr}\hat\rho = \sum_{i=1}^{d^2} \rho_i = 1 \quad
\Longleftrightarrow \quad 1 - \rho_1 = \sum_{i=2}^{d^2} \rho_i \geq \sum_{i=2}^{d^2-2} \rho_i.
\]
The last sum is composed of $(d^2-3)$ terms, whose minimum is $\rho_{d^2-2}$; so it must hold
\begin{gather*}
 1 - \rho_1 \geq \sum_{i=2}^{d^2-2} \rho_i \geq (d^2-3) \rho_{d^2-2} \\
\Longleftrightarrow \quad \rho_{d^2-2} \leq \frac{1 - \rho_1}{d^2-3}.
\end{gather*}
So we can finally say that a sufficient condition for $\hat\rho_K$ to be entangled
is
\[
 3 \frac{1 - \rho_1}{d^2-3} < \rho_1 \quad \Longleftrightarrow \quad \rho_1 > \frac{3}{d^2},
\]
which was to be demonstrated.
\end{proof}

This theorem allows us to find a factorization which ``splits'' the states into a maximally entangled state
(the eigenstate of $\rho_1$ or $\rho_{d^2-1}$) and a separable one (the other two eigenstates),
i.e.~to write a state in the form (\ref{split}).
We can then apply {\bf Theo.~\ref{splitteo}} to find a corresponding 
optimal EW with the same procedure.

%%%%%%%%%%%%%%%%%
\section{Absolute separability}
We have seen some of the possible constraints under which a mixed state can be represented as entangled.
In general, no final conclusions have been drawn in the literature yet. We shall now pose the opposite question:
are there (mixed) states that cannot appear entangled, for any possible factorization?
The answer is, of course, yes; the class of so called \emph{absolutely separable} states is not empty and its
``prototype'' is the normalized identity $\frac{1}{d} {\mathbb I}_d$. In fact, given any unitary $U$,
by definition,
\begin{equation}\label{tracial}
 U^\dagger \left( \frac{1}{d}  {\mathbb I}_d \right) U = \frac{1}{d} U^\dagger U = \frac{1}{d} {\mathbb I}_d.
\end{equation}
So the state remains separable, in particular, for all possible factorizations.

%%%%%%%%%%%%%%%%%%%%%%%%%%%%%%%%%%%%
\subsection{Ku\'s-\.Zyczkowski ball}

Absolute separability holds not only for the identity; as an example, any state belonging to a certain
neighbourhood of the identity has this property,~\footnote{``Neighbourhood'' here refers to a particular
notion of distance, the HS distance (\ref{dist}).} as we will state in the next theorem.
Let us first present a new definition:

\begin{definition}[Ku\'s-\.Zyczkowski ball]
Consider a bipartite $D$-dimensional system.
The \emph{Ku\'s-\.Zyczkowski ball} $B (\frac{1}{D}{\mathbb I}_D, r)$ is the maximal ball
(i.e.~the ball of maximal radius) of states of center $\frac{1}{D}{\mathbb I}_D$ which can be inscribed
into the set of separable states.
\end{definition}

The KZ ball is a set of absolutely separable states:

\begin{theorem}[Absolute separability of the Ku\'s-\.Zyczkowski ball] \label{kz}
 The radius of the KZ ball is $r_B = \frac{1}{\sqrt{D(D-1)}}$.
Every state belonging to the KZ ball is absolutely separable \emph{\cite[Th.4]{thirring}}.
\end{theorem}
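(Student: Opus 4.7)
The theorem asserts two things: every state in $B(\tfrac{1}{D}\mathbb{I}_D, r_B)$ is absolutely separable, and $r_B = \tfrac{1}{\sqrt{D(D-1)}}$. My plan is to handle the two assertions separately: absolute separability by pure invariance, and the value of the radius by an entanglement-witness argument combined with an explicit boundary construction.

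The absolute separability claim follows almost for free from two invariance facts: for any unitary $U$ the tracial state is fixed, $U\tfrac{1}{D}\mathbb{I}_D U^\dagger = \tfrac{1}{D}\mathbb{I}_D$, and the Hilbert-Schmidt norm is unitarily invariant, $\lVert U\hat\rho U^\dagger - \tfrac{1}{D}\mathbb{I}_D \rVert = \lVert \hat\rho - \tfrac{1}{D}\mathbb{I}_D \rVert$. Hence the ball $B(\tfrac{1}{D}\mathbb{I}_D, r)$ is mapped to itself by every unitary conjugation. Since a change of factorization algebra is implemented by such a conjugation (cf.~Theorem \ref{TPS}), inclusion in the separable set for one factorization immediately yields inclusion for every factorization.

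For the radius, I would first rewrite the ball-membership condition in a spectral form. Expanding
\begin{equation*}
\lVert \hat\rho - \tfrac{1}{D}\mathbb{I}_D \rVert^2 = \mbox{Tr}(\hat\rho^2) - \tfrac{2}{D}\mbox{Tr}(\hat\rho) + \tfrac{1}{D} = \mbox{Tr}(\hat\rho^2) - \tfrac{1}{D}
\end{equation*}
shows that $\hat\rho$ lies in the putative ball iff $\mbox{Tr}(\hat\rho^2) \leq \tfrac{1}{D-1}$. The theorem is then equivalent to: every state of purity at most $\tfrac{1}{D-1}$ is separable, and this bound is sharp. To prove the inclusion, I would use the entanglement-witness machinery of Subsection \ref{ew0}. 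Writing $\hat\rho = \tfrac{1}{D}\mathbb{I}_D + A$ with $A$ traceless, and any hermitian $E$ as $E = \tfrac{\mbox{Tr}E}{D}\mathbb{I}_D + E_\perp$ with $E_\perp$ traceless, one obtains
\begin{equation*}
\mbox{Tr}(E\hat\rho) = \tfrac{\mbox{Tr}E}{D} + \mbox{Tr}(E_\perp A) \geq \tfrac{\mbox{Tr}E}{D} - \lVert E_\perp \rVert\,\lVert A \rVert
\end{equation*}
by Cauchy-Schwarz. Consequently $\lVert A \rVert \leq r_B$ guarantees $\mbox{Tr}(E\hat\rho) \geq 0$ whenever every entanglement witness satisfies the universal inequality $\tfrac{\mbox{Tr}E/D}{\lVert E_\perp \rVert} \geq \tfrac{1}{\sqrt{D(D-1)}}$.

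The main obstacle is exactly this universal bound on EWs: proving that no entanglement witness can be so strongly ``tilted'' away from the tracial state. This requires extracting spectral information from the defining condition $\mbox{Tr}(E\hat\rho_{sep}) \geq 0$ on product projectors — a refinement of the kind of argument used in the proof of Theorem \ref{splitteo}. Once that step is achieved, sharpness would be demonstrated by producing entangled states at HS distance arbitrarily close to $\tfrac{1}{\sqrt{D(D-1)}}$; a natural candidate is the splitted family $\hat\rho_\beta = \beta\hat\pi + (1-\beta)\tfrac{\mathbb{I}-\hat\pi}{D-1}$ of Theorem \ref{splitteo}, for which a short computation using $\lVert \hat\pi - \tfrac{1}{D}\mathbb{I}_D\rVert^2 = \tfrac{D-1}{D}$ gives $\lVert \hat\rho_\beta - \tfrac{1}{D}\mathbb{I}_D \rVert = \tfrac{|\beta D - 1|}{\sqrt{D(D-1)}}$. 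The boundary $\lVert \hat\rho_\beta - \tfrac{1}{D}\mathbb{I}_D \rVert = r_B$ is attained at $\beta = 2/D$, which falls in the entangled regime $\beta > 1/d$ of Theorem \ref{splitteo} in the two-qubit case and can be supplemented for higher $d$ by invoking the Peres-Horodecki criterion (Theorem \ref{phcrit}) on isotropic states.
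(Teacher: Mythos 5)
The paper itself offers no proof of this theorem: it is quoted from \cite[Th.~4]{thirring}, which in turn rests on the separable-ball bound of Gurvits and Barnum \cite{gurvits}. Against that background, the first half of your proposal is fine: absolute separability of whatever ball is inscribed in the separable set does follow from $U^\dagger(\frac{1}{D}{\mathbb I}_D)U=\frac{1}{D}{\mathbb I}_D$ together with unitary invariance of the Hilbert--Schmidt norm, and your reformulation of ball membership as the purity condition $\mbox{Tr}(\hat\rho^2)\le\frac{1}{D-1}$, as well as the Cauchy--Schwarz estimate $\mbox{Tr}(E\hat\rho)\ge\frac{\mbox{Tr}E}{D}-\lVert E_\perp\rVert\,\lVert A\rVert$, are both correct.

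The problem is that the step you label ``the main obstacle'' is not a loose end to be tidied up: it \emph{is} the theorem. Using $\lVert E_\perp\rVert^2=\mbox{Tr}(E^2)-\frac{(\mbox{Tr}E)^2}{D}$, your required universal inequality $\frac{\mbox{Tr}E}{D}\ge\frac{\lVert E_\perp\rVert}{\sqrt{D(D-1)}}$ is equivalent to $\mbox{Tr}(E^2)\le(\mbox{Tr}E)^2$ for every entanglement witness $E$, which is precisely the nontrivial Gurvits--Barnum lemma on operators that are positive on product states; it is not a refinement of the manipulation in {\bf Theo.~\ref{splitteo}} --- indeed the witness ${\mathbb I}_{d^2}-d\hat\pi$ used there saturates the inequality for $d=2$, so nothing weaker can suffice. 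Without that lemma the lower bound on $r_B$ is simply asserted. Your sharpness argument also breaks down for $d\ge3$: the isotropic state $\hat\rho_\beta$ is entangled (equivalently, NPT) only for $\beta>\frac{1}{d}$, which by your own distance formula corresponds to $\lVert\hat\rho_\beta-\frac{1}{D}{\mathbb I}_D\rVert\ge(d-1)\,r_B$, so for $d\ge3$ this family never comes near the ball boundary and invoking {\bf Theo.~\ref{phcrit}} on it cannot help. A cleaner route to the upper bound is to note that partial transposition is a Hilbert--Schmidt isometry fixing $\frac{1}{D}{\mathbb I}_D$, so the largest ball inside the PPT set (hence inside the separable set) equals the inradius $\frac{1}{\sqrt{D(D-1)}}$ of the state space itself; NPT states just outside that radius are then supplied for every $d$ by the Werner-type family $\frac{1}{D}{\mathbb I}_D+tA$ with $A$ proportional to the traceless part of $\hat\pi^{(B)}$, not by the isotropic one.
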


%%%%%%%%%%%%%%%%%%%%%%%%%%%%%%
\subsection{Absolute separability for two qubits}
The set of absolutely separable states does not coincide with the KZ ball.
Characterization of such a set is an argument widely treated in the literature, just like, on the other hand,
the somewhat complementary characterization of the set of maximally entangled states.

The argument is discussed, for example, in \cite{verstraete} for the case of $2 \times 2$ dimensions;
here we shall just cite one main result, also reported in \cite[Lem. 2]{thirring}:
\begin{lemma}[Absolute separability for two qubits] \label{twoq}
 Let $\hat\rho$ be a state for a $2 \times 2$ dimensional system and
$\rho_1 \geq \rho_2 \geq \rho_3 \geq \rho_4$ its eigenvalues.
If it holds that
$$
\rho_1 - \rho_3 - 2 \sqrt{\rho_2 \rho_4} \leq 0,
$$
then $\hat\rho$ is absolutely separable.
\end{lemma}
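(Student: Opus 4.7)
The strategy is to reduce absolute separability to a variational statement about the unitary orbit of $\hat\rho$, and then quote a sharp optimization identity valid only in the two-qubit case.

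As a first step, I would reformulate the claim: $\hat\rho$ is absolutely separable precisely when every member of the orbit $\{U\hat\rho U^\dagger : U\in U(4)\}$ is separable. Unitary conjugation preserves the spectrum, so this orbit is exactly the set of all density matrices on $\mathbb{C}^2\otimes\mathbb{C}^2$ whose ordered eigenvalues are $\rho_1\geq\rho_2\geq\rho_3\geq\rho_4$. By the second theorem of \textbf{Subsection~\ref{pauli}} (Peres--Horodecki in $2\times 2$: separability is equivalent to PPT in this dimension), the lemma becomes the statement that $(U\hat\rho U^\dagger)^{(A)}\geq 0$ for every unitary $U$ on the total $4$-dimensional Hilbert space.

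As a second step, I would invoke the Verstraete--Audenaert--Dehaene--De Moor analysis \cite{verstraete}, which optimizes Wootters' concurrence $C(\cdot)$ over all density matrices with a prescribed spectrum. Their result, sharp in $2\times 2$, is
\[
\max_{U\in U(4)}\; C\bigl(U\hat\rho U^\dagger\bigr)\;=\;\max\bigl(0,\;\rho_1-\rho_3-2\sqrt{\rho_2\rho_4}\bigr).
\]
Since in $2\times 2$ dimensions the concurrence vanishes exactly on separable states (equivalently, on PPT states), the hypothesis of the lemma forces the right-hand side to be zero. Therefore every element of the orbit has vanishing concurrence, is separable, and $\hat\rho$ is absolutely separable.

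The genuinely difficult step is the variational identity above. One would derive it by writing a generic orbit element as $\sigma=\sum_i\rho_i|e_i\rangle\langle e_i|$ in an arbitrary orthonormal frame $\{|e_i\rangle\}$, expressing $C(\sigma)$ in terms of the spin-flipped matrix $\tilde\sigma=(\sigma_y\otimes\sigma_y)\sigma^{*}(\sigma_y\otimes\sigma_y)$, and then optimizing over the frame via Lagrange multipliers to identify the extremal configuration and the closed-form maximum. This optimization is the substantive content of \cite{verstraete}; the remaining ingredients---the Peres--Horodecki reduction and the spectral characterization of the orbit---are routine, so the main obstacle is to cite or reproduce the identity and verify that the lemma's numerical hypothesis is precisely the condition for its right-hand side to vanish.
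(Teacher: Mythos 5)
Your reduction is sound, and it is worth noting at the outset that the paper itself offers no proof of this lemma: it is stated as a quoted result, with a pointer to \cite{verstraete} and to \cite[Lem.~2]{thirring}. So there is no internal argument to compare against; what you have done is reconstruct, correctly, the chain of reasoning that makes the citation do its work. The reformulation of absolute separability as separability of the entire unitary orbit is right (two density matrices with the same spectrum are unitarily conjugate, and checking $\hat\rho$ against the factorization $U^\dagger\mathcal{M}U$ is the same as checking $U\hat\rho U^\dagger$ against $\mathcal{M}$), and the variational identity
\[
\max_{U}\, C\bigl(U\hat\rho U^\dagger\bigr)=\max\bigl(0,\;\rho_1-\rho_3-2\sqrt{\rho_2\rho_4}\bigr)
\]
is indeed the content of \cite{verstraete}, whose vanishing is precisely the lemma's hypothesis. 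Two small remarks. First, the Peres--Horodecki step is logically redundant in your argument: in two qubits, vanishing concurrence already characterizes separability directly (via Wootters' formula for the entanglement of formation), so you need not route through PPT at all; you invoke both equivalences but use only one. Second, you are candid that the Lagrange-multiplier optimization over orthonormal frames is deferred to \cite{verstraete} rather than carried out --- that is exactly the status the statement has in the paper, so the proposal neither falls short of nor exceeds the paper's own standard of rigor, but a self-contained proof would require reproducing that optimization.
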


%%%%%%%%%%%%%%%%%%%
\subsection{Example}

We now show some examples of absolutely and not absolutely separable states
in the two-qubit case.
Consider, over a two-qubit state ${\cal A} \otimes {\cal B}$,~\footnote{Here ${\mathbb I}$ denotes the normalized identity
$\frac{1}{4}{\mathbb I}_4$.} the density matrix
\begin{equation} \label{rhomax}
\begin{split}
\hat\rho_U = \frac{1}{4}
\left(
\begin{array}{cccc}
 1 & 0 & 0 & 1 \\
 0 & 1 & 1 & 0 \\
 0 & 1 & 1 & 0 \\
 1 & 0 & 0 & 1 \\
\end{array}
\right) \\
= \frac{1}{4} ({\mathbb I} \otimes {\mathbb I} + \sigma_x \otimes \sigma_x).
\end{split}
\end{equation}
This state is separable but it does not belong to the KZ ball, in fact
\[
 \begin{split}
  (\hat\rho_U - {\mathbb I})^2 &= \frac{1}{16}
\left(
\begin{array}{cccc}
 0 & 0 & 0 & 1 \\
 0 & 0 & 1 & 0 \\
 0 & 1 & 0 & 0 \\
 1 & 0 & 0 & 0 \\
\end{array}
\right)^2 \\ 
&= \frac{1}{16}
\left(
\begin{array}{cccc}
 1 & 0 & 0 & 0 \\
 0 & 1 & 0 & 0 \\
 0 & 0 & 1 & 0 \\
 0 & 0 & 0 & 1 \\
\end{array}
\right) \\ &= \frac{1}{4}{\mathbb I},
 \end{split}
\]
so it holds
\[
    \delta(\hat\rho_U , {\mathbb I})
    = \sqrt{\mbox{Tr}[(\hat\rho_U - {\mathbb I})^2]}
    = \frac{1}{2}\sqrt{\mbox{Tr}({\mathbb I})} = \frac{1}{2} \geq \frac{1}{\sqrt{12}} = r_B.
\]

It may then exist an unitary operator which tranforms $\hat\rho_U$ into an entangled state,
that is, for example, the case of:
\[
K = \frac{1}{\sqrt{2}} \left(
\begin{array}{cccc}
 1 & 0 & 0 & 1 \\
 0 & \sqrt{2} & 0 & 0 \\
 0 & 0 & \sqrt{2} & 0 \\
 -1 & 0 & 0 & 1 \\
\end{array}
\right)
\]

\[ \Longrightarrow
\hat\rho_{KU} \equiv K \hat\rho_U K^\dagger =
\frac{1}{4} \left(
\begin{array}{cccc}
 2 & 0 & 0 & 0 \\
 0 & 1 & 1 & 0 \\
 0 & 1 & 1 & 0 \\
 0 & 0 & 0 & 0 \\
\end{array}
\right).
\]
$\hat\rho_{KU}$ is an entangled state as it is not PPT; this we see immediately by
applying the partial transpose $(\hat\rho_{KU})^{(B)}$ to a vector $\vec{x}_U \equiv (1 - \sqrt{2}, 0, 0, 1)$:
\[
 (\hat\rho_{KU})^{(B)} \vec{x}_U = \frac{1}{4}
\left(
\begin{array}{cccc}
 2 & 0 & 0 & 1 \\
 0 & 1 & 0 & 0 \\
 0 & 0 & 1 & 0 \\
 1 & 0 & 0 & 1 \\
\end{array}
\right)
\left(
\begin{array}{c}
 1 - \sqrt{2} \\
 0 \\
 0 \\
 1 \\
\end{array}
\right) =
\frac{1 - \sqrt{2}}{4} \vec{x}_U.
\]
Correspondingly, $\hat\rho_U$ does not satisfy the assumption of {\bf Lemma \ref{twoq}},
being its spectrum $\{ \frac{1}{2}, \frac{1}{2}, 0, 0 \}$:
$$
\rho_{U1} - \rho_{U3} - 2 \sqrt{\rho_{U2} \rho_{U4}} = \frac{1}{2} > 0,
$$
as we expected, being the state not absolutely separable.

As an example of absolutely separable state we can take:
\[
 \hat\rho_V \equiv \frac{1}{4}
\left(
\begin{array}{cccc}
 1 & 0 & 0 & \frac{1}{2} \\
 0 & 1 & \frac{1}{2} & 0 \\
 0 & \frac{1}{2} & 1 & 0 \\
 \frac{1}{2} & 0 & 0 & 1 \\
\end{array}
\right),
\]
with spectrum $\{ \frac{3}{2}, \frac{3}{2}, \frac{1}{2}, \frac{1}{2} \}$.
$\hat\rho_V$ is quite ``similar'' to $\hat\rho_U$, but it satisfies both conditions of 
{\bf Lemma \ref{twoq}} and {\bf Lemma \ref{kz}}:
\begin{gather*}
    \delta(\hat\rho_V , {\mathbb I})
    = \sqrt{\mbox{Tr}[(\hat\rho_V - {\mathbb I})^2]}
    = \frac{1}{4}\sqrt{\mbox{Tr}({\mathbb I})} = \frac{1}{4} \leq \frac{1}{\sqrt{12}} = r_B, \\    
    \rho_{U1} - \rho_{U3} - 2 \sqrt{\rho_{U2} \rho_{U4}} = 1 - \sqrt{3} < 0.
\end{gather*}

It would be a long and not simple challenge to directly prove the absolute separability of $\hat\rho_V$;
here we shall only higlight that $\hat\rho_{KV} \equiv K \hat\rho_V K^\dagger$ is still separable:
\[
\hat\rho_{KV} \equiv K \hat\rho_V K^\dagger =
\frac{1}{8} \left(
\begin{array}{cccc}
 3 & 0 & 0 & 0 \\
 0 & 2 & 1 & 0 \\
 0 & 1 & 2 & 0 \\
 0 & 0 & 0 & 1 \\
\end{array}
\right)
\]
and the spectrum of $(\hat\rho_{KV})^{(A)} = (\hat\rho_{KV})^{(B)}$ is
$\{2 + \sqrt{2}, 2, 2, 2 - \sqrt{2} \}$, which is sufficient (in $2 \times 2$ dimensions)
for separability of $\hat\rho_{KV}$.

\section{Quantum teleportation}

Quantum teleportation is an interesting feature of entangled systems which allows to transmit
quantum information (a qubit) through a classical channel. This possibility relies on the fact
that <<\emph{we may cut a cake in different ways}>> \cite{thirring}, i.e.~the same composite system
may be factorized in different ways, as we have discussed at lenght.

Suppose two observers, Alice and Bob, can communicate, at first, through a q-channel.
Alice prepares the state $|\Psi^- \rangle$ of the Bell basis~\footnote{For the Bell basis see
{\bf Subs. \ref{sbell}}.} for a composite system ${\cal A} \otimes {\cal B}$
and sends one qubit to Bob. Later Alice recieves, through a q-channel ${\cal C}$ an unknown qubit
$| \psi \rangle = a |\uparrow \rangle + b |\downarrow \rangle$ which she wants to transmit to Bob via classical communication.
Alice knows that the state of the  system ${\cal C} \otimes {\cal A} \otimes {\cal B}$ is now:
\[
 \begin{split}
 | \psi \rangle \otimes | \Psi^- \rangle =
  &(a | \uparrow \rangle + b |\downarrow \rangle) \otimes 
  \frac{| \uparrow \downarrow \rangle - |\downarrow \uparrow \rangle}{\sqrt{2}} \\
 = &- \frac{1}{2} |\Psi^- \rangle \otimes (a | \uparrow \rangle + b |\downarrow \rangle)
    + \frac{1}{2} |\Phi^- \rangle \otimes (a | \downarrow \rangle + b |\uparrow \rangle) + \\
   &+ \frac{1}{2} |\Phi^+ \rangle \otimes (a | \downarrow \rangle - b |\uparrow \rangle)
    - \frac{1}{2} |\Psi^+ \rangle \otimes (a | \uparrow \rangle - b |\downarrow \rangle).
\end{split}
\]
The four terms of the sum, states of the total system, are all composed of:
\begin{itemize}
 \item a Bell state, over ${\cal C} \otimes {\cal A}$;
 \item a state that can be transformed into $| \psi \rangle$ with an unitary transformation, over ${\cal B}$,
which is in Bob's hands.
\end{itemize}
So now Alice just has to measure in which Bell state her subsystem is; the operation will leave
Bob's subsystem in a well-defined state, which Alice does not know, while she knows
what transformation Bob must perform, in order to obtain $| \psi \rangle$ on ${\cal B}$. Since there are
four possible transformations, she will need two classical bits to tell Bob which one to choose.

More generally speaking, let us notice that the hypothesis for ${\cal A}$, ${\cal B}$,
and ${\cal C}$ to be qubits is not essential, having just an illustrative purpose;
it is sufficient for the three systems to have the same dimension.

Furthermore, Alice can initially prepare on ${\cal A} \otimes {\cal B}$ whatever maximally
entangled state she wants. Her choice will define a bijective correspondence between basis
vectors (an \emph{isometry}) $J_{AB} : {\cal A} \longrightarrow {\cal B}$.
  In fact every maximally entangled state $| \phi \rangle$
on a $d \times d$ dimensional bipartite space can be written in the form
\[
| \phi \rangle = \frac{1}{\sqrt{d}}
\sum_{i=1}^d  | \phi_i^{(A)} \rangle \otimes | \phi_i^{(B)} \rangle \equiv
\sum_{i=1}^d  | \phi_i^{(A)} \rangle \otimes | J_{AB} (\phi_i^{(A)}) \rangle,
\]
Where the $\{|\phi_i^{(A)} \rangle\}$ and the $\{|\phi_i^{(B)} \rangle\}$ are bases for the respective
subspaces.
In the previous example
$$
|\Psi^- \rangle = \frac{| \uparrow \downarrow \rangle - |\downarrow \uparrow \rangle}{\sqrt{2}} \equiv
\frac{| \uparrow \rangle | J_{AB} (\uparrow) \rangle - |\downarrow \rangle | J_{AB} (\downarrow) \rangle}{\sqrt{2}}
$$
defines $J_{AB}$.

Another isometry $J_{CA}$ is defined when Alice performs the projective measurement over a basis
of maximally entangled states for ${\cal C} \otimes {\cal A}$.
The composition $J_{AB} \circ J_{CA} \equiv J_{CB} : {\cal C} \longrightarrow {\cal B}$ is what
Bob needs to know in order to decode his qubit.

%%%%%%%%%%%%%%%%
\section{Geometry of quantum states}

It can be useful and interesting to visualize the location of
entangled and separable states among the set of all physical states.

In Fig. \ref{fig:geom} the space of density matrices of a two-qubit system is plotted
with respect to the tensor products of Pauli matrices in the three directions.
All states lie within a tetrahedron whose vertices are the four Bell states (\ref{bell}).

\begin{figure}[th]
\centering
\includegraphics[width=12.6cm,height=11.4cm]{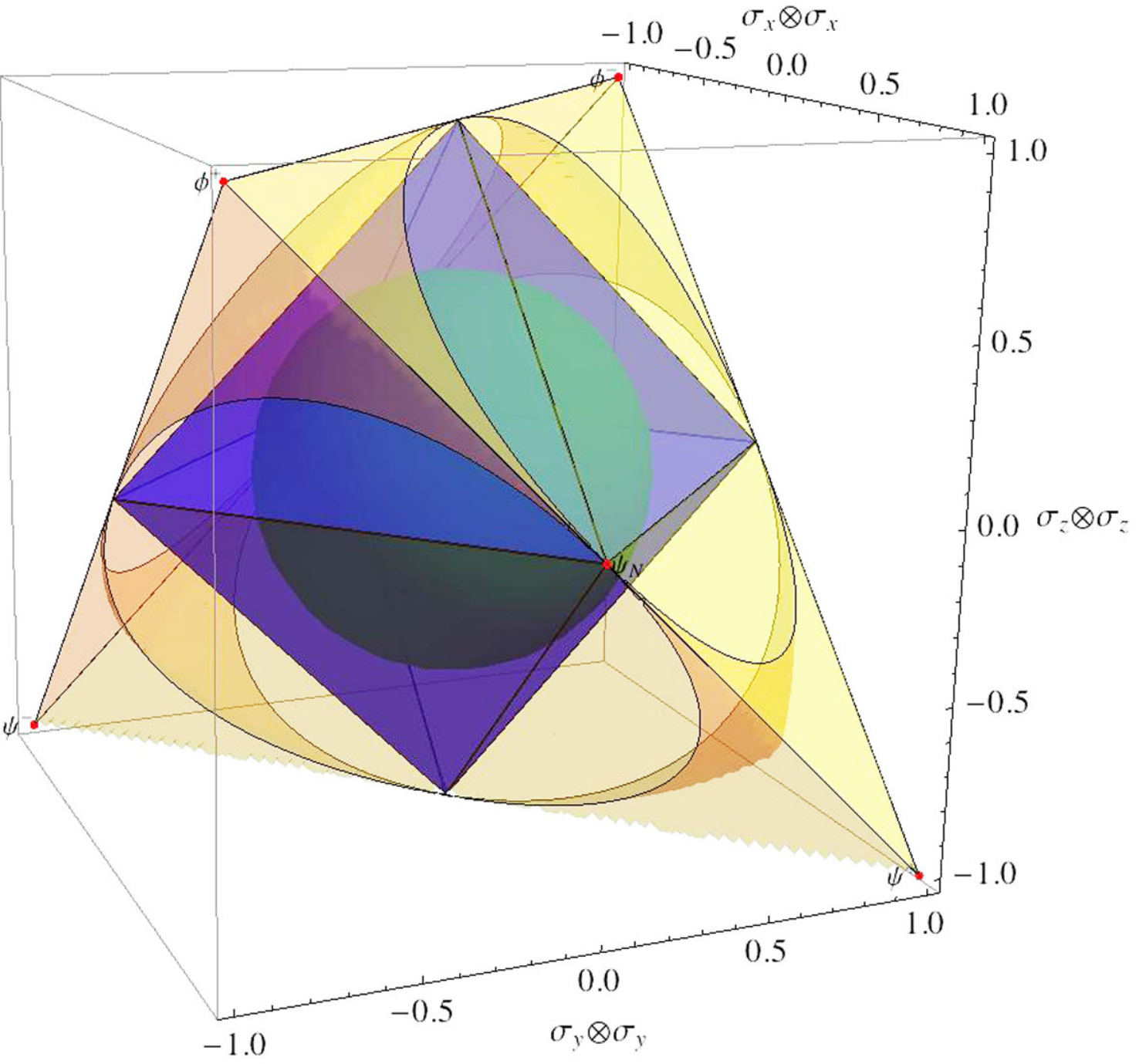}
\caption{\emph{Plot of physical states in $2 \times 2$ dimensions} \cite[Fig. 1]{thirring}.}
\label{fig:geom}
\end{figure}

Let us briefly show why: any density matrix can be expanded in the form
$$
\hat\rho = \frac{1}{4} ({\mathbb I} \otimes {\mathbb I} + c_x \hat\sigma_x \otimes \hat\sigma_x 
+ c_y \hat\sigma_y \otimes \hat\sigma_y + c_z \hat\sigma_z \otimes \hat\sigma_z).
$$
The eigenvalues of $\hat\rho$ can be expressed as four functions of the three parameters $c_x, c_y, c_z$;
setting the eigenvalues equal to zero, we find the equations
of the planes where the four faces of the tetrahedron lie.
The condition of positive semidefiniteness requires that all the
states lie inside the region enclosed within the four planes.

Inside the tetrahedron, the separable states form a double pyramid with its vertices lying on
the boundary of the tetrahedron itself, halfway between the Bell states; one of those vertices
is the state (\ref{rhomax}), which is maximally entangled just like the Bell states.
Actually, the double pyramid corresponds to the intersection of the tetrahedron and
its mirror image (with respect to the identity).

The KZ ball is set in the middle of the space, obviously contained in the double pyramid.
Its center is the normalized identity.

%%%%%%%%%%%%%%%%%%%%%
\chapter{Conclusions}
\begin{quotation}
\begin{flushright}
 ``We always have had a great deal of difficulty in understanding the world view  that quantum mechanics represents.
   [...] it takes a generation or two until it becomes obvious that there's no real problem.'' \\ 
 \vspace{0.2cm} \emph{R. Feynman}
\end{flushright}
\end{quotation}
\vspace{1cm}

We have here represented some of the main results obtained in the search for an answer to the question
``which states can be represented as entangled (or separable)?''.
It has been shown that, for the case of pure states, the situation is completely clear. This is in some way due to
the fact that, if we can describe a state by a wavefunction, we then have complete knowledge about it, at least
as much as the indeterministic nature of quantum mechanics allows us.

On the other hand, when a state is expressed in terms of a density operator, we generally lack information about the
non-local correlations this state may carry with respect to a certain factorization.
So, while it is always possible to represent it as separable, corresponding to considering all
quantum correlations as local, we cannot always be sure to find a factorization where
this correlation appears non-local.

Thus, except for low-dimensional cases, like the two qubits, for which more definitive conclusions have
been drawn, the answer to our question is far from having been given.
Efforts in this direction are various in recent literature, many of the works included in our bibliography
can be regarded as examples; also the strategies used are several: from the attempt to describe 
with the best accuracy the geometry of quantum states, to the study of entanglement in identical particles
systems or of entanglement measures, and many others. \\
The great interest of this subject is not only due to its fascinating theoretical aspects, but also due
to its importance from the strictly experimental or applications point of view, from
the physics of condensed matter to the data encryption. In recent years the first prototypes
of quantum hardwares and softwares have been developed and began to be commercialized.

%%%%%%%%%%%%%%%%%
\appendix

\chapter{Algebraic definitions and properties} \label{algebra}

In this appendix we recall some useful algebraic definitions and properties, which are used in the rest of the work.
Main references are \cite{bratteli} and \cite{caltagirone}.

\begin{definition}[Algebra]
 Let ${\cal A}$ be a vector space over the field ${\mathbb C}$ equipped with an inner operation
$\cdot : {\cal A} \times {\cal A} \longrightarrow {\cal A}$ which associates $x \cdot y$ (or simply $xy$)
to each pair $x,y \in {\cal A}$. ${\cal A}$ is an \emph{algebra} if this operation is distributive and associative, i.e.,
if for all $x, y, z \in {\cal A}$ and all $a, b \in {\mathbb C}$ holds:
\begin{itemize}
 \item $x \cdot (y \cdot z) = (x \cdot y) \cdot z$;
 \item $x \cdot (y + z) = x \cdot y + x \cdot z$;
 \item $(ab) (x \cdot y) = (ax) \cdot (by)$.
\end{itemize}
\end{definition}

\begin{definition}[Subalgebra]
 A subspace ${\cal B}$ of an algebra ${\cal A}$ which is also an algebra with respect to the same
operations of ${\cal A}$ is called a \emph{subalgebra} of ${\cal A}$.
Given any two subalgebras ${\cal A}$ and ${\cal B}$, the expression ${\cal A} \vee {\cal B}$ denotes the
\emph{minimal subalgebra} containing both ${\cal A}$ and ${\cal B}$.
\end{definition}

\begin{definition}[Identity]
 An \emph{identity} ${\mathbb I}$ of an algebra ${\cal A}$ is an element of ${\cal A}$ such that
$A = {\mathbb I} A = A {\mathbb I}$ for all $A \in {\cal A}$. Note that the identity is unique, but not every algebra has an identity.
\end{definition}

\begin{definition}[Ideal]
 A subspace ${\cal B}$ of an algebra ${\cal A}$ is called a \emph{left ideal} (respectively \emph{right ideal}) if $a \in {\cal A}$
and $b \in {\cal B}$ implies that $ab \in {\cal B}$ (respectively $ba \in {\cal B}$). An ideal which is both left and right is
called a \emph{two-sided ideal} (or simply an ideal). Note that each ideal is an algebra.
\end{definition}

\begin{definition}[Simple algebra]
 An algebra with identity is \emph{simple} if its only two-sided ideals are $\{ 0\}$ and the algebra itself,
i.e., it has no nontrivial two-sided ideals.
\end{definition}

\begin{definition}[Center]
 The center of an algebra ${\cal A}$ is the set of all those $c \in {\cal A}$ such that $ca = ac$ for every $a \in {\cal A}$.
\end{definition}

\begin{definition}[Central algebra]
An algebra is \emph{central} if its center consists only of the multiples of the identity.
\end{definition}

\begin{lemma} \label{full} 
 The full matrix algebra ${\cal M}_d$ is \emph{simple} and \emph{central}.
\end{lemma}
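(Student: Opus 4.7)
The plan is to work throughout with the standard basis of matrix units $\{E_{ij}\}_{i,j=1}^d$ of ${\cal M}_d$, where $E_{ij}$ has a $1$ in entry $(i,j)$ and zeros elsewhere. The only input needed beyond the definitions is their multiplication rule $E_{ij}E_{kl}=\delta_{jk}E_{il}$ and the fact that they form a basis, so any $A\in{\cal M}_d$ decomposes uniquely as $A=\sum_{m,n}a_{mn}E_{mn}$. These two facts will drive both halves of the proof.

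For simplicity, let $\mathcal{I}\subseteq{\cal M}_d$ be a nonzero two-sided ideal, and pick $A\in\mathcal{I}$ with some entry $a_{kl}\neq 0$. The idea is to ``extract'' an arbitrary matrix unit $E_{ij}$ from $A$ by multiplying on the left and right with appropriate matrix units: using the product rule one computes $E_{ik}\,A\,E_{lj}=a_{kl}E_{ij}$. Since $\mathcal{I}$ is two-sided, this element lies in $\mathcal{I}$, and dividing by the nonzero scalar $a_{kl}$ gives $E_{ij}\in\mathcal{I}$ for every pair $(i,j)$. As the $E_{ij}$ span ${\cal M}_d$, we conclude $\mathcal{I}={\cal M}_d$, proving simplicity.

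For centrality, suppose $C=\sum_{m,n}c_{mn}E_{mn}$ lies in the center, so $CE_{ij}=E_{ij}C$ for every $i,j$. Evaluating both sides using the multiplication rule yields
\[
\sum_{m}c_{mi}E_{mj}=\sum_{n}c_{jn}E_{in}.
\]
Matching coefficients of the basis element $E_{mn}$ on each side, one gets that $c_{mi}=0$ unless $m=i$ (from the left-hand side contributing to $E_{mj}$) and that the surviving diagonal values satisfy $c_{ii}=c_{jj}$ for all $i,j$. Hence $C=\lambda\,\mathbb{I}_d$ for a single scalar $\lambda$, so the center consists only of multiples of the identity.

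I do not anticipate a genuine obstacle: the argument is a clean bookkeeping exercise with matrix units once the reader is comfortable with the product rule $E_{ij}E_{kl}=\delta_{jk}E_{il}$. The only point that deserves a line of care is the ``extraction'' step $E_{ik}AE_{lj}=a_{kl}E_{ij}$, since it is the mechanism by which the two-sided ideal structure is used in full (both left and right multiplication are needed), and it is also what pins down why the analogous statement fails for algebras with nontrivial two-sided ideals.
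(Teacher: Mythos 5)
Your proof is correct: both the extraction identity $E_{ik}AE_{lj}=a_{kl}E_{ij}$ for simplicity and the coefficient-matching argument for centrality are the standard matrix-unit computations, and the bookkeeping checks out. Note that the paper itself states Lemma \ref{full} without proof, deferring to the cited algebra references, so your argument in fact supplies a self-contained justification that the text omits.
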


\begin{definition}[Centralizer, or commutant, of an algebra]
 Given a subalgebra ${\cal A}$ of an algebra ${\cal M}$, the \emph{commutant} ${\cal A}'$ of ${\cal A}$ in ${\cal M}$
is the space of all members of ${\cal M}$ commuting with every element of ${\cal A}$, i.e.
$$
{\cal A}' = \{ b \in {\cal M} | (\forall a \in {\cal A})(ab = ba) \}.
$$
Every commutant is an algebra.
\end{definition}

\begin{lemma}[Double centralizer] \label{double} 
Given a finite-dimensional simple central algebra ${\cal M}$ over an arbitrary field
and a simple subalgebra ${\cal A} \subseteq {\cal M}$,
the centralizer ${\cal A}'$ is simple, ${\cal A}'' = {\cal A}$, 
and $ \dim {\cal A}' \cdot \dim {\cal A} = \dim {\cal M}$.
\end{lemma}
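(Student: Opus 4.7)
The strategy is to view $\mathcal{M}$ as a left module over the auxiliary algebra $S := \mathcal{A} \otimes_k \mathcal{M}^{op}$ through the action $(a \otimes m^{op}) \cdot x := axm$, and then to identify the centralizer $\mathcal{A}'$ with $\mathrm{End}_S(\mathcal{M})$. The plan is, in order: (i) show that $S$ is simple; (ii) show $\mathcal{M}$ is a faithful $S$-module, hence a finite direct sum of copies of the unique simple $S$-module; (iii) compute $\mathrm{End}_S(\mathcal{M})$ and read off both the simplicity of $\mathcal{A}'$ and its dimension; (iv) deduce $\mathcal{A}'' = \mathcal{A}$ by applying the dimension identity to the pair $(\mathcal{M}, \mathcal{A}')$.

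For (i), note that $\mathcal{M}^{op}$ is central simple whenever $\mathcal{M}$ is, and the tensor product of a simple algebra with a central simple algebra over $k$ is again simple (a standard fact about two-sided ideals in such tensor products). For (ii), since $\mathcal{M}$ has a unit, right multiplication already embeds $\mathcal{M}^{op}$ faithfully into $\mathrm{End}_k(\mathcal{M})$, so the $S$-action on $\mathcal{M}$ is faithful. Because $S$ is finite-dimensional and simple, Wedderburn's theorem now furnishes a simple $S$-module $V$ and an integer $n \geq 1$ with $\mathcal{M} \cong V^{\oplus n}$.

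For (iii), any $\phi \in \mathrm{End}_S(\mathcal{M})$ commutes with right multiplication by $\mathcal{M}$, and evaluating on $1$ shows $\phi(x) = yx$ for $y := \phi(1)$; commutation with left multiplication by $\mathcal{A}$ then forces $y \in \mathcal{A}'$. Hence $\mathcal{A}' \cong \mathrm{End}_S(\mathcal{M}) \cong M_n(D)$ with $D := \mathrm{End}_S(V)$ a division algebra by Schur's lemma, proving simplicity of $\mathcal{A}'$. Writing $a := \dim_k \mathcal{A}$, $m := \dim_k \mathcal{M}$, $v := \dim_k V$, $\delta := \dim_k D$, Wedderburn applied to $S$ acting on $V$ gives $S \cong \mathrm{End}_D(V)$, so $\dim_k S = v^{2}/\delta$; combining with $\dim_k S = a m$ and $m = n v$ yields $v = a n \delta$, whence
\[
\dim_k \mathcal{A}' = n^{2} \delta = \frac{m}{a},
\]
which is precisely $\dim \mathcal{A} \cdot \dim \mathcal{A}' = \dim \mathcal{M}$.

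For (iv), the inclusion $\mathcal{A} \subseteq \mathcal{A}''$ is immediate. Since $\mathcal{A}'$ is itself a simple subalgebra of the central simple algebra $\mathcal{M}$, the dimension identity just proved also applies to the pair $(\mathcal{M}, \mathcal{A}')$, giving $\dim \mathcal{A}' \cdot \dim \mathcal{A}'' = \dim \mathcal{M}$. Comparing with $\dim \mathcal{A}' \cdot \dim \mathcal{A} = \dim \mathcal{M}$ forces $\dim \mathcal{A} = \dim \mathcal{A}''$, and the inclusion becomes equality. The main obstacle is step (i): simplicity of $\mathcal{A} \otimes_k \mathcal{M}^{op}$ is exactly where the centrality hypothesis on $\mathcal{M}$ is essential — without it the $S$-module picture collapses and none of the subsequent bookkeeping around Wedderburn and Schur goes through.
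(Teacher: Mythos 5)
The paper itself does not prove this lemma: it is stated in Appendix \ref{algebra} as a recalled algebraic fact, with the reader pointed to the literature (the references given for that appendix are \cite{bratteli} and \cite{caltagirone}), so there is no in-paper argument to compare yours against. Your proof is the standard module-theoretic one and is essentially correct: realizing $\mathcal{M}$ as a module over $S=\mathcal{A}\otimes_k\mathcal{M}^{\mathrm{op}}$, identifying $\mathcal{A}'$ with $\mathrm{End}_S(\mathcal{M})$, and closing the loop with Wedderburn's theorem, Schur's lemma and a dimension count is exactly how this result is established in the associative-algebra literature, and your bookkeeping ($\dim_k S = v^2/\delta = am$, $m=nv$, hence $\dim\mathcal{A}'=n^2\delta=m/a$) is right. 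Two small repairs are worth making. First, your justification of faithfulness in step (ii) is misplaced: faithfulness of the right $\mathcal{M}^{\mathrm{op}}$-action alone does not give faithfulness of the full $S$-action; the clean argument is that the kernel of $S\to\mathrm{End}_k(\mathcal{M})$ is a two-sided ideal of the simple algebra $S$ and the action is nonzero, so the kernel is trivial. Second, your argument (like the theorem as usually stated) tacitly requires $\mathcal{A}$ to contain the unit of $\mathcal{M}$ --- otherwise $\mathcal{M}$ is not a unital $S$-module and the evaluation $\phi(x)=\phi(1)x$ in step (iii) breaks down; this hypothesis does hold in the only place the paper invokes the lemma (the proof of Theorem \ref{TPS}, where $\mathcal{A}$ is unitarily equivalent to ${\mathbb I}_a\otimes\mathcal{M}_k$ inside $\mathcal{M}_d$), but since the paper's definition of subalgebra does not force a shared identity, you should state it explicitly.
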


%%%%%%%%%%%%%%%%
\chapter{Proofs} \label{proof}

In this appendix are presented, for completeness, the proofs of some statements from the rest of the work.

%%%%%%%%%%%%%%%%%%%%%%%%%%%%%%%%%%%%%%%%%%%
\section{Theorem \ref{TPS} {(Induced TPSs)}}

\emph{Statement}.
Consider the full matrix algebra ${\cal M}_d$ on the finite dimensional Hilbert space ${\cal H} = {\mathbb C}^d$
and two subalgebras ${\cal A}$ and ${\cal B}$ of ${\cal M}_d$, for which there hold:
\begin{description}
 \item[(i)] Independence: $[{\cal A},{\cal B}] = {0}$, i.e. $[a,b] = 0 \; \forall \; (a \in {\cal A}, b \in {\cal B})$;
 \item[(ii)] Completeness: ${\cal A} \otimes {\cal B} \cong {\cal A} \vee {\cal B} = {\cal M}_d$.
\end{description}
Then ${\cal A}$ and ${\cal B}$ induce a TPS on ${\mathbb C}^d$, i.e.~there exist two H-spaces ${\mathbb C}^k$ and ${\mathbb C}^l$,
$d = k \cdot l$, and an unitary mapping $U : {\mathbb C}^k \otimes {\mathbb C}^l \rightarrow {\mathbb C}^d$ such that
$$
{\cal A} = U({\cal M}_k \otimes {\mathbb I}_l) U^\dagger \mbox{\emph{ and }} {\cal B} = U({\mathbb I}_k \otimes {\cal M}_l) U^\dagger.
$$
In particular, ${\cal A}$ and ${\cal B}$ are isomorphic to ${\cal M}_k$ and ${\cal M}_l$ respectively.

\begin{proof}
The algebra ${\cal A}$ can be decomposed in a direct sum of simple subalgebras
${\cal A} = \oplus^n_{i=1} {\cal A}_i$, and each ${\cal A}_i$ is isomorphic to a full
matrix algebra ${\cal M}_{k_i}$, where the $k_i$ are positive integers uniquely determined by ${\cal A}$ up to
permutations \cite[Th. 11.2]{takesaki}. So we can write:
$$
{\cal A} \cong \bigoplus^n_{i=1} {\cal M}_{k_i}.
$$

Any representation (say $\Pi$) of ${\cal A}$ on ${\mathbb C}^d$ can be expressed as a direct sum of ${\cal M}_{k_i}$ times $l_i$, where
the $l_i$ are positive integers called \emph{multiplicities} (note that $\sum^n_{i=1} k_i \cdot l_i = d$):
$$
\Pi({\cal A}) = \bigoplus^n_{i=1} l_i {\cal M}_{k_i} = \bigoplus^n_{i=1} ({\mathbb I}_{l_i} \otimes {\cal M}_{k_i}).
$$
$\Pi$ is uniquely determined by the multiplicities up to a unitary equivalence, i.e. given any two representations $\Pi_1$ and
$\Pi_2$ with the same multiplicities there will always exist an unitary operator $U: \bigoplus^n_{i=1} ({\mathbb C}^{k_i} \otimes
{\mathbb C}^{l_i}) \rightarrow {\mathbb C}^d$ such that $\Pi_1 = U \Pi_2 U^\dagger$.
In particular, considering ${\cal A}$ as a trivial representation of itself with multiplicities $\{a_1, \cdots, a_n\}$, 
for ${\cal A}$ and its centralizer ${\cal A}'$ we can write
\begin{eqnarray*}
{\cal A} &=& U(\bigoplus^n_{i=1} {\mathbb I}_{a_i} \otimes {\cal M}_{k_i}) U^\dagger, \\
{\cal A}' &=& U(\bigoplus^n_{i=1} {\cal M}_{a_i} \otimes {\mathbb I}_{k_i}) U^\dagger \\
\Rightarrow \qquad {\cal A}\vee {\cal A}' &=& U(\bigoplus^n_{i=1} {\cal M}_{a_i} \otimes {\cal M}_{k_i}) U^\dagger \subseteq {\cal M}_d, \\
\dim {\cal A} \vee {\cal A}' &=& \sum^n_{i=1} a^2_i \cdot k^2_i \leq d^2. 
\end{eqnarray*}

By conditions {\bf (i)} and {\bf (ii)} it follows respectively ${\cal B} \subseteq {\cal A}'$ and $\dim {\cal A} \vee {\cal B} = d^2$.
Putting it all together we have
\begin{equation} \label{dimTPS}
d^2 = (\sum^n_{i=1} a_i \cdot k_i)^2 = \dim {\cal A} \vee {\cal B} \leq \dim {\cal A} \vee {\cal A}' = \sum^n_{i=1} a^2_i \cdot k^2_i \leq d^2.
\end{equation}
Being $a_i$ and $k_i$ positive numbers, \ref{dimTPS} implies $n = 1$. Thus ${\cal A}$ is simple and central
(it is isomorphic to ${\mathbb I}_a \otimes {\cal M}_k$) and we can apply {\bf Lemma \ref{double}}:
$$
\dim {\cal A}\cdot  \dim {\cal A}' = \dim {\cal M}_d.
$$
By {\bf (ii)} it holds
$$
\dim {\cal A}\cdot  \dim {\cal B} = \dim {\cal M}_d,
$$
so that
$$
\dim {\cal A}' = \dim {\cal B}.
$$
This equality, together with ${\cal B} \subseteq {\cal A}'$, implies 
${\cal B} = {\cal A}' = U({\cal M}_a \otimes {\mathbb I}_k) U^\dagger.$
\end{proof}

%%%%%%%%%%%%%%%%%%%%%%%%%%%%%%%%%%%%%%%%%%%%%%%%%%%%%%%
\section{Theorem \ref{tailor} {(Tailored observables)}}

\emph{Statement}.
Let ${\cal H} = {\mathbb C}^d$ be a H-space with an orthonormal basis $\{ |i \rangle \}$, and take $k, l \in {\mathbb N}$ such that $d = k \cdot l.$
Then, for every (pure) state $| \psi \rangle = \sum^d_{i=1} c_i|i \rangle$ and every $\lambda_1, \cdots ,\lambda_{min \{k,l\}} \in {\mathbb C}$
with $\sum^d_{i=1} |c_i|^2 = \sum^{min \{k,l\}}_{i=1} |\lambda_i|^2$, there exist algebras ${\cal A}$ and ${\cal B}$ satisfying the conditions
of {\bf Theorem \ref{TPS}}, and a unitary operator $U$ such that
$$
| \psi \rangle = U \sum^{min \{k,l\}}_{i=1} \lambda_i |i \rangle_A |i \rangle_B,
$$
with $\{|i \rangle_A \}$ and $\{|i \rangle_B \}$ orthonormal bases for H-spaces
${\cal H}_A = {\mathbb C}^k$ and ${\cal H}_B = {\mathbb C}^l$, respectively.

\begin{proof}
We assume $k \leq l$ and we define the H-space ${\cal H}' \equiv {\mathbb C}^k \otimes {\mathbb C}^l$. Since $\dim {\cal H} = \dim {\cal H}'$,
we have ${\cal H} \cong {\cal H}'$.
Subspaces ${\mathbb C}^k$ and ${\mathbb C}^l$ will have local algebras of observables ${\cal A}'$ and ${\cal B}'$ which,
thanks to {\bf Lemma \ref{su2}}, are generated by the sets of operators $\{ S^{A'}_x, S^{A'}_y, S^{A'}_z \}$ and
$\{ S^{B'}_x, S^{B'}_y, S^{B'}_z \}$ respectively; the $S_{x,y,z}$ are the well-known abstract generators of $\mathfrak{su}(2)$
fulfilling the commutation relation $[S_i,S_j] = \mbox{i} \hbar \epsilon_{ijk} S_k$.~\footnote{In physics they usually represent
the angular momentum operators along the three axes.}

As a basis for ${\mathbb C}^k$ (${\mathbb C}^l$) we can choose the eigenvectors of $S^{A'}_z$ ($S^{B'}_z$),
$\{|i \rangle_{A'}\}^k_{i=1}$ and $\{|i \rangle_{B'}\}^l_{i=1}$, respectively.
An arbitrarily entangled state $|\phi \rangle$ on ${\cal H}'$ may then be written in its Schmidt form as
$$
|\phi \rangle = \sum^k_{i=1} \lambda_i |i \rangle_{A'} |i \rangle_{B'},
$$
where the $\lambda_i$ are the Schmidt coefficients (positive real numbers), containing all the information about the amount of entanglement,
and the sum over $i$ is limited by the dimension of the smallest of the subspaces. Note that, as far as both $| \psi \rangle$
and $| \phi \rangle$ are normalized, we have $\sum^k_{i=1} |\lambda_i|^2 = \sum^d_{i=1} |c_i|^2 = 1$.

Now, using the Gram-Schmidt procedure \cite[p. 511]{lang},
we can construct new bases of ${\cal H}$ and ${\cal H}'$ containing respectively $| \psi \rangle$ and $| \phi \rangle$:
$\{ | \psi_1 \rangle = | \psi \rangle, \cdots, | \psi_d \rangle \}$ and
$\{ | \phi_1 \rangle = | \phi \rangle, \cdots, | \phi_{k \cdot l = d} \rangle \}$.

Since ${\cal H} \cong {\cal H}'$ we can always find a unitary operator $U: {\mathbb C}^k \otimes
{\mathbb C}^l \rightarrow {\mathbb C}^d$ mapping ${\cal H}'$ into ${\cal H}$ such that:
$$
U |\phi_i \rangle = |\psi_i \rangle, \; ( i = 1, \cdots, d).
$$
It is immediate to see that the subalgebras ${\cal A}$ and ${\cal B}$ acting on ${\cal H}$ can be written as
$$
{\cal A} = U {\cal A}' U^{\dagger} = U ({\cal M}_k \otimes {\mathbb I}_l) U^{\dagger}, \quad
{\cal B} = U {\cal B}' U^{\dagger} = U ({\mathbb I}_k \otimes {\cal M}_l) U^{\dagger},
$$
and are generated respectively by the operators 
$$
U (S^{A'}_i \otimes {\mathbb I}_l) U^{\dagger} \mbox{ and } U ({\mathbb I}_k \otimes S^{B'}_i) U^{\dagger}, \; (i = x,y,z),
$$
i.e.~by representations of $\mathfrak{su}(2)$ on ${\mathbb C}^d$.

Thus, because of {\bf Lemma \ref{su2}}, ${\cal A}$ and ${\cal B}$ fulfill the conditions from {\bf Theorem \ref{TPS}}.
\end{proof}

%%%%%%%%%%%%%%%%%%%%%%%%%%%%%%%%%%%%%
\section{Optimal entanglement witness} \label{ew}

Here we prove that the operator defined in (\ref{opt}) is an optimal entanglement witness. The original
theorem, known as \emph{Bertlmann-Narnhofer-Thirring theorem}, was first demonstrated in
\cite[Sect. III]{bertlmann}.

\emph{Statement}.
Given any entangled state $\hat\rho_e$, and $\hat\rho_0$ the separable state for which
$\delta (\hat\rho_e , \hat\rho_0)$ is minimum, consider the operator:
$$
E \equiv \frac{\hat\rho_0 - \hat\rho_e - \langle \hat\rho_0 | \hat\rho_0 - \hat\rho_e \rangle {\mathbb I}}
                    {\lVert \hat\rho_0 - \hat\rho_e \rVert}.
$$
Then $E$ is an optimal EW with respect to $\hat\rho_e$, i.e.~it has the three
properties (\ref{ew1}), (\ref{ew2}), (\ref{ew3}).

\begin{proof}
 Let us set, without any loss of generality, $\lVert \hat\rho_0 - \hat\rho_e \rVert = 1$. We will thus have,
for any state $\hat\rho$:
\[
\begin{split}
\langle \hat\rho | E \rangle &= \langle \hat\rho | \hat\rho_0 - \hat\rho_e \rangle - \langle \hat\rho_0 | \hat\rho_0 - \hat\rho_e \rangle
\langle \hat\rho | {\mathbb I} \rangle \\
 &= \langle \hat\rho | \hat\rho_0 - \hat\rho_e \rangle - \langle \hat\rho_0 | \hat\rho_0 - \hat\rho_e \rangle \\
 &= \langle \hat\rho - \hat\rho_0 | \hat\rho_0 - \hat\rho_e \rangle,
\end{split}
\]
since $\langle \hat\rho | {\mathbb I} \rangle = \mbox{Tr} (\hat\rho^\dagger {\mathbb I}) = \mbox{Tr} \hat\rho = 1$.

It will then clearly hold:
\begin{gather*}
 \langle \hat\rho_e | E \rangle = \langle \hat\rho_e - \hat\rho_0 | \hat\rho_0 - \hat\rho_e \rangle
= - \lVert \hat\rho_e - \hat\rho_0 \rVert^2  \leq 0; \\
\langle \hat\rho_0 | E \rangle = 0;
\end{gather*}
which prove respectively (\ref{ew1}) and (\ref{ew3}).

Now we pass to (\ref{ew2}): taken any separable $\hat\rho_s$ (except, of course, $\hat\rho_0$),
we can surely decompose $|\hat\rho_s - \hat\rho_0 \rangle$
orthogonally with respect to $| \hat\rho_0 - \hat\rho_e \rangle$:
$$
|\hat\rho_s - \hat\rho_0 \rangle \equiv |v_\perp \rangle + |v_\parallel \rangle,
$$
with
$$
\langle v_\perp | \hat\rho_0 - \hat\rho_e \rangle = 0, \quad | v_\parallel \rangle = \alpha |\hat\rho_0 - \hat\rho_e \rangle,
$$
where $\alpha$ is a real number.

It must hold $\lVert \hat\rho_0 - \hat\rho_e \rVert \leq \lVert v_\parallel \rVert$, i.e. $\alpha \geq 1$, as we prove 
by way of contradiction (See Fig. \ref{fig:ew} for illustration).

\begin{figure}[th]
\centering
\includegraphics[width=12.6cm,height=9.5cm]{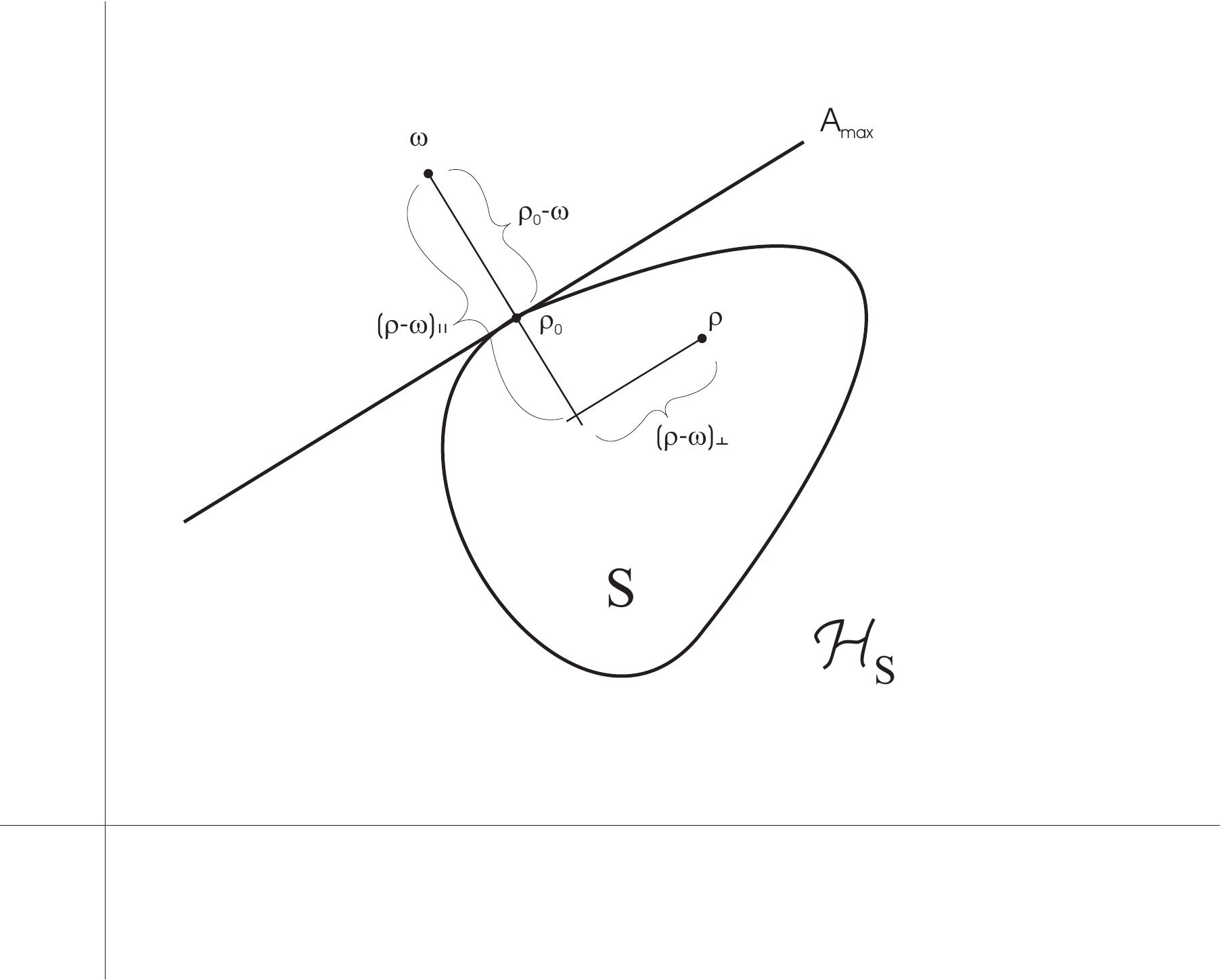}
\caption{\emph{Illustration of the theorem in {\bf Sect. \ref{ew}}. Here $\omega$ denotes $\hat\rho_e$, $\rho$ denotes $\hat\rho_s$,
$S$ is the set of separable states and ${\cal H}_s$ is the whole algebra of observables.
$A_{max}$ is $\partial C$} \cite[Fig. 2]{lewenstein}.}
\label{fig:ew}
\end{figure}

Consider the two sets $C \equiv \{ \hat\rho : \lVert (\hat\rho - \hat\rho_e)_\parallel \rVert
< \lVert (\hat\rho_0 - \hat\rho_e)_\parallel \rVert \}$,
with its boundary $\partial C = \{ \hat\rho : \lVert (\hat\rho - \hat\rho_e)_\parallel \rVert
= \lVert (\hat\rho_0 - \hat\rho_e)_\parallel \rVert \}$,
and $B_e = B(\hat\rho_e, \delta(\hat\rho_e,\hat\rho_0))$.
\newpage
Notice that:
\begin{itemize}
 \item  $\lVert (\hat\rho_0 - \hat\rho_e)_\parallel \rVert = \lVert \hat\rho_0 - \hat\rho_e \rVert$;
 \item  $\hat\rho_0$ is a boudary point for $B_e$, and $\partial C$ is the tangent plane
to $B_e$ in $\hat\rho_0$; in fact, if $\hat\rho_1$ and $\hat\rho_2$ belong to $\partial C$, it holds:
\[
 \begin{split}
  \langle \hat\rho_1 - \hat\rho_2 | \hat\rho_0 - \hat\rho_e \rangle &=
  \langle (\hat\rho_1 - \hat\rho_2)_\parallel | \hat\rho_0 - \hat\rho_e \rangle \\
  &= \langle (\hat\rho_1 - \hat\rho_e + \hat\rho_e - \hat\rho_2)_\parallel | \hat\rho_0 - \hat\rho_e \rangle \\
  &= \langle (\hat\rho_1 - \hat\rho_e)_\parallel - (\hat\rho_2 - \hat\rho_e)_\parallel | \hat\rho_0 - \hat\rho_e \rangle = 0,\\
\end{split}
\]
by assumption.

 \item $\partial C \nsubseteq C$.
\end{itemize}

 Since both $C$ and the set of separable states are convex,
then, if $C$ contains a certain separable $\hat\rho_C$, it will also contain every $\hat\rho_\lambda$
belonging to $L \equiv \{ \hat\rho : \hat\rho = (1 - \lambda) \hat\rho_C + \lambda \hat\rho_0,
\lambda \in [0,1[\; \}$, and $\hat\rho_\lambda$
will be separable. So, when $\lambda \rightarrow 1$,
$\hat\rho_\lambda$ gets arbitrarily close to $\hat\rho_0$.

The angle between $|\hat\rho_\lambda - \hat\rho_0 \rangle$ and $| \hat\rho_0 - \hat\rho_e \rangle$ obviously
does not depend on the parameter $\lambda$:
\[
 \begin{split}
 \frac{\langle \hat\rho_\lambda - \hat\rho_0| \hat\rho_0 - \hat\rho_e \rangle}
{\lVert \hat\rho_\lambda - \hat\rho_0\rVert \lVert \hat\rho_0 - \hat\rho_e \rVert}
&= \frac{(1-\lambda)\langle \hat\rho_C - \hat\rho_0| \hat\rho_0 - \hat\rho_e \rangle}
{(1-\lambda)\lVert \hat\rho_C - \hat\rho_0\rVert \lVert \hat\rho_0 - \hat\rho_e \rVert} \\
&= \frac{\langle \hat\rho_C - \hat\rho_0| \hat\rho_0 - \hat\rho_e \rangle}
{\lVert \hat\rho_C - \hat\rho_0\rVert \lVert \hat\rho_0 - \hat\rho_e \rVert};
\end{split}
\]
this means that also the angle between $L$ and $\partial C$ is fixed, being $\partial C$ orthogonal
to $| \hat\rho_0 - \hat\rho_e \rangle$.
Since it holds $\hat\rho_C \in C \Longrightarrow \hat\rho_C \notin \partial C$, then $L$ does not lie on
$\partial C$: the straight line on which $L$ lies is secant with
respect to $B_e$.
So, while $\hat\rho_\lambda \rightarrow \hat\rho_0$, there are only two possibilities:
\begin{itemize}
 \item either $\hat\rho_\lambda$ lies inside $B_e$ for every $\lambda$ greater then a certain $\lambda_0$ (it ``reaches from inside'');
 \item or $\hat\rho_\lambda$ never lies inside $B_e$ (it ``reaches from outside'').
\end{itemize}
But the second choice is impossible, because $\hat\rho_0$ is a boundary point for $C$ too, and $B_e \subseteq C$,
so if $\hat\rho_\lambda$ never lies inside $B_e$, it also never lies inside $C$, which is a contradiction.
There must then be some separable $\hat\rho_\lambda$ which lies inside $B_e$, i.e. $\delta (\hat\rho_\lambda ,\hat\rho_e )
< \delta (\hat\rho_0 , \hat\rho_e)$, and this is a contradiction too, since $\delta (\hat\rho_0 , \hat\rho_e)$
is assumed to be minimal.

So we have finally shown that $C$ cannot contain any separable $\hat\rho_s$, i.e. $\lVert \hat\rho_0 - \hat\rho_e \rVert
\leq \lVert (\hat\rho_s)_\parallel \rVert$.

We can now return to our statement: 
$|\hat\rho_s - \hat\rho_0 \rangle$ is in particular a separable state, so it holds, 
as announced, $\alpha \geq 1$. We actually have proved only $| \alpha | \geq 1$, but the set of separable
states must contain $|\hat\rho_s - \hat\rho_0 \rangle$ (the case $\alpha = 1$) and it must be convex,
so the parameter cannot ``jump'' to negative values.

Concluding, we have:
\[
 \begin{split}
\langle \hat\rho_s | E \rangle = \langle \hat\rho_s - \hat\rho_0 | \hat\rho_0 - \hat\rho_e \rangle
= \alpha \langle \hat\rho_0 - \hat\rho_e | \hat\rho_0 - \hat\rho_e \rangle
= \alpha \lVert \hat\rho_0 - \hat\rho_e \rVert^2 \geq 0,
 \end{split}
\]
which is the thesis.	
\end{proof}

%%%%%%%%%%%%%%%%%%%%%%%%%%
%\chapter{sandbox and notes}

%%%%%%%%%%%%%%%%
\cleardoublepage
%\phantomsection
\addcontentsline{toc}{chapter}{Bibliography}


\begin{thebibliography}{99}

\bibitem[Bell]{bell}
John S. Bell. \emph{On the Einstein Podolsky Rosen paradox}. Physics 1, 195 (1964).

\bibitem[Bertlmann]{bertlmann}
Reinhold A. Bertlmann, Heide Narnhofer and Walter Thirring. \emph{A Geometric Picture of Entanglement and Bell Inequalities}.
Phys. Rev. A66, 032319 (2002). arXiv:quant-ph/0111116v3.

\bibitem[Bratteli]{bratteli}
Ola Bratteli and Derek W	. Robinson. \emph{Operator Algebras and Quantum Statistical Mechanics - Second edition}. Springer 1987, reprint 2002.

\bibitem[Caltagirone]{caltagirone}
Luca Caltagirone. \emph{How to Taylor Entanglement in Multipartite Systems}. Tesi di Laurea Triennale in Fisica, Universit\`a degli studi di Pisa, a.a. 2010/2011.

\bibitem[Cicogna]{cicogna}
Giampaolo Cicogna. \emph{Metodi Matematici della Fisica}. Springer, 2008.

\bibitem[Dirac]{dirac}
Paul A. M. Dirac. \emph{The Principles of Quantum Mechanics - Fourth Edition}. Oxford University Press, 1958.

\bibitem[Di\'osi]{diosi}
Lajos Di\'osi. \emph{A Short Course in Quantum Information Theory}. Springer, 2007.

\bibitem[EPR]{epr}
Albert Einstein, Boris Podolsky and Nathan Rosen.
\emph{Can Quantum-Mechanical Description of Physical Reality Be Considered Complete?}.
Phys. Rev. 47, 777 (1935).

\bibitem[Guccione]{guccione}
Giovanni Guccione. \emph{Entanglement in Bipartite Systems}. Tesi di Laurea Triennale in Fisica, Universit\`a degli studi di Pisa, a.a. 2009/2010.

\bibitem[Gurvits]{gurvits}
Leonid Gurvits and Howard Barnum. \emph{Largest separable balls around the maximally mixed bipartite quantum state}.
Phys. Rev. A66, 062311 (2002). arXiv:quant-ph/0204159v2.

\bibitem[Harshman]{harshman}
Nathan L. Harshman and Kedar S. Ranade. \emph{Observables Can Be Tailored to Make Any Pure State Entangled (or Not)}.
Phys. Rev. A84, 012303 (2011). arXiv:1102.0955v1 [quant-ph].

\bibitem[Ishizara]{ishizara}
Satoshi Ishizaka and Tohya Hiroshima. \emph{Maximally entangled mixed states in two qubits}.
Phys. Rev. A62, 022310 (2000). arXiv:quant-ph/0003023v1.

\bibitem[Konishi]{konishi}
Kenichi Konishi and Giampiero Paffuti. \emph{Meccanica Quantistica: nuova introduzione}. Plus - Pisa University Press, 2005.

\bibitem[Landau]{landau}
Lev D. Landau and Evgenij M. Lif{\v s}its. \emph{Meccanica quantistica - Teoria non relativistica}. Editori Riuniti university press, 2010.

\bibitem[Lang]{lang}
Serge Lang. \emph{Algebra - Second edition}. Addison-Wesley, 1984.

\bibitem[Lewenstein]{lewenstein}
Maciej Lewenstein, Dagmar Bru\ss{}, J. Ignacio Cirac, Barbara Kraus, Marek Ku\'s, Jan Samsonowicz, Anna Sanpera and Rolf Tarrach.
\emph{Separability and distillability in composite quantum systems – a primer~–}.
J. Mod. Opt. 47, 2841 (2000). arXiv:quant-ph/0006064v2.

\bibitem[Picasso]{picasso}
Luigi E. Picasso. \emph{Lezioni di Meccanica Quantistica}. Edizioni ETS, 2000.

\bibitem[Takesaki]{takesaki}
Masamichi Takesaki. \emph{Theory of Operator Algebra I}. Springer 1979, reprint 2002.

\bibitem[Thirring]{thirring}
Walter Thirring, Reinhold A. Bertlmann, Philipp K\"ohler and Heide Narnhofer. \emph{Entanglement or separability: The choice of how to factorize the
algebra of a density matrix}.
Eur. Phys. J. D64, 181 (2011). arXiv:1106.3047v2 [quant-ph].

\bibitem[Verstraete]{verstraete}
Frank Verstraete, Koenraad Audenaert, Tijl De Bie and Bart De Moor. \emph{Maximally entangled mixed states of two qubits}.
Phys. Rev. A64, 012316 (2001). arXiv:quant-ph/0011110v3.

\end{thebibliography}
\end{document}